\tikzstyle arrowstyle=[scale=1]
\tikzstyle directed=[postaction={decorate,decoration={markings,
    mark=at position .525 with {\arrow[arrowstyle]{stealth}}}}]
\newtheorem{theorem}{Theorem}
\newtheorem{proposition}{Proposition}
\DeclareMathOperator{\sgn}{sgn} 
\newcommand{\dual}[1]{#1^{\cross}} 
\newcommand{\operator}[1]{\mathsf{#1}}
\newcommand{\adjoint}[1]{\operator{#1}^\dagger}
\newcommand{\hilbertspace}{\mathcal{H}}
\newcommand{\rhsextension}[1]{\operator{#1}^{\cross}} 
\newcommand{\domain}[1]{\mathcal{D}(#1)}
\newcommand{\fourier}{\mathfrak{F}}
\newcommand{\reals}{\mathbb{R}}
\title{\bf Conjugates to One Particle Hamiltonians in 1-Dimension in Differential Form}
\author{Ralph Adrian E. Farrales\textsuperscript{a}, Herbert B. Domingo\textsuperscript{b}, and Eric A. Galapon\textsuperscript{a}\\ \\\textsuperscript{a}Theoretical Physics Group, National Institute of Physics\\University of the Philippines Diliman, Philippines\\ \\\textsuperscript{b}Laboratory for Applied Mathematical Physics\\Department of Physical Sciences and Mathematics\\University of the Philippines Manila, Philippines}
\begin{document}

\maketitle

\begin{abstract}
    A time operator is a Hermitian operator that is canonically conjugate to a given Hamiltonian. For a particle in 1-dimension, a Hamiltonian conjugate operator in position representation can be obtained by solving a hyperbolic second-order partial differential equation, known as the time kernel equation, with some boundary conditions. One possible solution is the time of arrival operator. Here, we are interested in finding other Hamiltonian conjugates by further studying the boundary conditions. A modified form of the time kernel equation is also considered which gives an even bigger solution space.
\end{abstract}

\section{Introduction}
If time was an observable, then, in standard quantum mechanics, it would be represented by an operator that is (i) self-adjoint or at least Hermitian; and is (ii) canonically conjugate to the Hamiltonian. The former assures us that the eigenvalues are real, and the latter is a consequence of its dynamics. Since our time operator $\operator{T}$ evolves as $\dv*{\operator{T}}{t} = \pm 1$ (where the sign depends on whether it increases or decreases in step with parametric time), then the Heisenberg equation of motion gives the time-energy canonical commutation relation
\begin{equation} \label{eq:teccr}
    [\operator{T},\operator{H}] = \pm i\hbar \, ,
\end{equation}
where $\operator{H}$ is the Hamiltonian. In contrast with parametric time, this time operator $\operator{T}$ contains a dynamical aspect and usually encompasses questions regarding the duration of an event, or the time of event occurrence, whose value changes as parametric time changes. The relation \eqref{eq:teccr} is a consequence of Dirac's correspondence principle between the Poisson bracket and the commutator \cite{gotay2000}. This commutation relation between time and energy is also connected to the time-energy uncertainty relation $\Delta\operator{T} \Delta\operator{H} \geq \hbar/2$, and contributes to some of the many possible interpretations on $\Delta \operator{T}$ and $\Delta \operator{H}$ \cite{muga2008}.

Finding such time operator has been met with several obstacles; in general, the problem of time in quantum mechanics has been a subject of much controversy throughout the years \cite{muga2008,muga2009}. Very early on, Pauli's infamous argument denying the existence of such operator \cite{pauli1980} shaped much of the research, and pushed towards a nonstandard approach in dealing with time. In his footnote regarding the Heisenberg equation of motion, Pauli rejected the existence of a Hermitian operator $\operator{T}$ which satisfies 
\eqref{eq:teccr} (the unitarity of $\exp(-iE\operator{T}/\hbar)$ indicates that he actually meant that there exists no \textit{self-adjoint} operator $\operator{T}$) relegating time as merely an ordinary number. However, it was rigorously shown that there is no inconsistency in assuming a bounded self-adjoint time operator conjugate to the Hamiltonian with a semi-bounded, unbounded, or finitely countable spectrum \cite{galapon2002}. We note though that while the self-adjointness of $\operator{T}$ is also a desired property, it will be sufficient to find Hermitian operators $\operator{T}$ (without any detailed analysis of the domains) satisfying the commutation relation \eqref{eq:teccr}. 

Various authors have worked on finding such operator $\operator{T}$: using quantization \cite{aharonov1961,galapon2018}, using a wave function in time-representation \cite{kijowski1974}, the energy shift operator \cite{bauer1983}, and a partial derivative with respect to the energy \cite{razavy1969,olkhovsky1974,goto1981}. Here, we highlight three methods of finding operators conjugate to the Hamiltonian: the work of Bender and Dunne using basis operators \cite{bender1989exact,bender1989integration}, the work of Galapon using supraquantization \cite{galapon2004}, and the work of Galapon and Villanueva using the Liuoville superoperator \cite{galapon2008}.

Bender and Dunne's solution \cite{bender1989exact,bender1989integration} constitutes expanding both $\operator{T}$ and $\operator{H}$ in terms of basis operators
\begin{align} 
    \operator{T}_{m,n} &= \frac{1}{2^n} \sum_{k=0}^n \binom{n}{k} \operator{q}^k \operator{p}^m \operator{q}^{n-k} \label{eq:bdbo1} \\
    &= \frac{1}{2^m} \sum_{j=0}^m \binom{m}{j} \operator{p}^j \operator{q}^n \operator{p}^{m-j} \, , \label{eq:bdbo2}
\end{align}
for $m \geq 0$ and $n \geq 0$. Note that the two forms are equivalent, due to the commutation relation between $\operator{q}$ and $\operator{p}$, i.e., $[\operator{q},\operator{p}]=i\hbar$. This can also be extended when either $m$ or $n$ is negative: for $n \geq 0$ and $m < 0$, we use \eqref{eq:bdbo1}; while for $m \geq 0$ and $n < 0$, we use \eqref{eq:bdbo2}. The Bender-Dunne basis operators \eqref{eq:bdbo1} and \eqref{eq:bdbo2} are the Weyl-ordered quantization of $p^m q^n$ and are densely defined operators in the Hilbert space $L^2(\reals)$ \cite{bunao2014}. Other basis operators such as the simple symmetric and Born-Jordan ordering are also possible choices \cite{domingo2015,bagunu2021}.

For a given Hamiltonian and for a given choice of basis operators, the coefficients of the expansion of $\operator{H}$ will already be known. The unknown operator $\operator{T}$ may then take the form
\begin{equation} \label{eq:benderdunne}
    \operator{T} = \pm \sum_{m,n} \alpha_{m,n} \operator{T}_{m,n} \, ,
\end{equation}
where the $\alpha_{m,n}$'s are solved by imposing \eqref{eq:teccr} (and Hermiticity if desired) and getting a recurrence relation. Note the non-uniqueness of this solution is due to the fact that we can add any operator $\operator{C}$ which commutes with the Hamiltonian to get another solution $\operator{T} + \operator{C}$. Bender and Dunne's minimal solution is obtained by vanishing as many $\alpha_{m,n}$'s as possible while still satisfying the recurrence relation generated by \eqref{eq:teccr}. Using the quantized $p^m q^n$ as basis operators is in contrast with doing quantization of the classical observable itself, wherein the latter does not guarantee that \eqref{eq:teccr} is satisfied \cite{gotay2000,galapon2018}.

Galapon's \cite{galapon2004} supraquantization approach provides a solution of \eqref{eq:teccr} in coordinate representation. While addressing the problems of quantization and of the quantum time of arrival and inspired by earlier efforts by Mackey \cite{mackey1968}, Galapon introduced the idea of supraquantization which aims to construct quantum observables using the axioms of quantum mechanics and the properties of the system. In supraquantization, the classical observable is the boundary condition, which is in contrast with quantization wherein the classical observable is treated as the starting point.

The supraquantized operators are constructed under the rigged Hilbert space $\dual{\Phi} \supset \hilbertspace \supset \Phi$ \cite{bohm1978}, and the generalized observable in $\Phi$-representation takes the integral form
\begin{equation}
	(\mathcal{T}\varphi)(q) = \int_{-\infty}^\infty \braket{ q | \mathcal{T} | q' } \varphi(q') \dd{q'} \, .
\end{equation}
The kernel is assumed to take the form
\begin{equation} \label{eq:tkintro}
	\braket{ q | \mathcal{T} | q' } = \frac{\mu}{i\hbar} T(q,q') \sgn(q - q') \, ,
\end{equation}
inspired by the 1-dimensional time of arrival of a particle of mass $\mu$ under some continuous potential $V(q)$. The rigged extension of \eqref{eq:teccr} imposes that for $\mathcal{T}$ to be conjugate to the rigged Hilbert space extension of $\operator{H}$, then the kernel factor $T(q,q')$ should satisfy
\begin{equation} \label{eq:tkeintro}
    -\frac{\hbar^2}{2\mu} \, \pdv[2]{T(q,q')}{q} + \frac{\hbar^2}{2\mu} \, \pdv[2]{T(q,q')}{{q'}} + [V(q) - V(q')] T(q,q') = 0 \, ,
\end{equation}
and
\begin{equation} \label{eq:tkegbcpm}
	\dv{T(q,q)}{q} + \pdv{T(q,q')}{q} \bigg|_{q'=q} + \pdv{T(q,q')}{{q'}} \bigg|_{q'=q} = \mp 1 \, ,
\end{equation}
where, in the right hand side of \eqref{eq:tkegbcpm}, $-1$ is used when increasing with step in time, and $+1$ when decreasing (the latter being used when getting the time of arrival solution). The hyperbolic second-order partial differential equation \eqref{eq:tkeintro} is referred to as the time kernel equation \cite{galapon2004}, and, together with condition \eqref{eq:tkegbcpm}, defines a family of solutions which are canonically conjugate to the Hamiltonian. The quantum time of arrival is but one possible solution of \eqref{eq:tkeintro} and \eqref{eq:tkegbcpm}; and hence, just one example of a Hermitian operator conjugate to the Hamiltonian.

Galapon and Villanueva's Liouville solution \cite{galapon2008} uses the Liouville superoperator $\mathcal{L}_\operator{A} = [\operator{A}, \cdot]$ so that the solution takes the form
\begin{equation} \label{eq:liouville}
    \operator{T} = \mp \mathcal{L}_\operator{H}^{-1} (i\hbar\operator{1}) \, .
\end{equation}
Dividing $\mathcal{L}_\operator{H}$ into its kinetic and potential parts, $\mathcal{L}_\operator{K}$ and $\mathcal{L}_\operator{V}$, enables a geometric expansion of \eqref{eq:liouville}. The domain is to be restricted to the Bender-Dunne operators \eqref{eq:benderdunne} so that the inverse can be well-defined. The Liouville solution is then given by
\begin{equation}
    \operator{T} = \pm \mu \sum_{k=0}^\infty (-1)^k \qty(\mathcal{L}_\operator{K}^{-1} \mathcal{L}_\operator{V})^k \operator{T}_{-1,1} \, .
\end{equation}
It turns out that this is just the quantum version of the classical time of arrival at the origin, and is in the same vein as supraquantization. It was observed that for linear systems, the Liouville solution and the Bender-Dunne minimal solution coincided for $\alpha_{-1,1} = \mu$. The Liouville solution is equal to the Weyl quantization of the local time of arrival (expansion of the time of arrival about the free solution) for linear systems; meanwhile, only the leading term is equal to Weyl quantization for nonlinear systems. In coordinate representation, the Liouville kernel satisfies both \eqref{eq:tkeintro} and \eqref{eq:tkegbcpm} for everywhere analytic potentials. This Liouville solution highlights the connection between the Bender-Dunne solution and the supraquantized solution via the time kernel equation.

While much of the focus has been on the time of arrival solution, it should be noted that the time of arrival is not the only possible time observable that exists. The multi-faceted nature of time means that there are other time observables that will satisfy the commutation relation \eqref{eq:teccr}, not just the time of arrival. To reiterate, if $\operator{T}$ satisfies \eqref{eq:teccr}, then one can construct another operator, $\operator{T} + \operator{C}$, that is also a solution to \eqref{eq:teccr}, where $\operator{C}$ commutes with the Hamiltonian. One example is by choosing $\operator{C} = f(\operator{H})$ where $f$ is some suitable function of the Hamiltonian $\operator{H}$. Our interest is then to investigate these other conjugate solutions.

In this paper, we focus our attention on constructing a conjugate operator in $\Phi$-representation under the supraquantization approach. That is, we shall be using the differential equation \eqref{eq:tkeintro} and the condition \eqref{eq:tkegbcpm} to study these other Hamiltonian conjugate solutions. Solutions to this time kernel equation \eqref{eq:tkeintro}, satisfying the conjugate condition \eqref{eq:tkegbcpm}, and satisfying the additional boundary conditions $T(q,q) = q/2$ and $T(q,-q) = 0$ along the diagonal, corresponds to the quantum time of arrival at the origin of a particle in 1-dimension under potential $V(q)$. The solution is unique, is conjugate to the Hamiltonian, is Hermitian, satisfies time-reversal symmetry, and reduces to the classical time of arrival in the classical limit \cite{galapon2004}. We would like also to further study the condition \eqref{eq:tkegbcpm} to extract a more general form of boundary conditions for $T(q,q)$ and $T(q,-q)$, so that the other conjugate solutions can be constructed using methods in \cite{galapon2004}. Since the method is intertwined with the time of arrival, we restrict ourselves to operators which decrease in step with time, i.e., for $-1$ in the right hand side of \eqref{eq:teccr}, and $+1$ in the right hand side of \eqref{eq:tkegbcpm}. The negative of our solution will simply be the corresponding operator which increases in step with time (e.g., time of flight).

The rest of the paper is organized as follows. In Section \ref{sec:tke}, we briefly review the time kernel equation of \cite{galapon2004}. In Section \ref{sec:conjugate}, we derive the general form of $T(q,q)$ and $T(q,-q)$ which satisfy \eqref{eq:tkegbcpm}, prove the existence and uniqueness of the solution, and get conditions for Hermiticity and time-reversal symmetry. In Section \ref{sec:examples}, we look at some interesting examples of other conjugate solutions, primarily the time of arrival plus negative powers of the Hamiltonian. In Section \ref{sec:mtke}, we explore a modified form of the time kernel equation \eqref{eq:tkeintro} and condition \eqref{eq:tkegbcpm} that was introduced by Domingo \cite{domingo2004}, which removes the assumption of the form of the kernel \eqref{eq:tkintro}. We look at the modified conjugate solutions generated here, and compare it with the original time kernel equation solution. In Section \ref{sec:conclusion}, we conclude.

\section{The Time of Arrival Operator in 1-Dimension and the Time Kernel Equation} \label{sec:tke}

Classically, the time for a particle initially at $(q,p)$ in phase space at time $t = 0$ to arrive at some other point $x$ in the configuration space is given by
\begin{equation} \label{eq:ctoa}
	T_x(q,p) = -\sgn(p) \sqrt{\frac{\mu}{2}} \int_x^q \frac{\dd{q'}}{\sqrt{H(q,p) - V(q')}} \, ,
\end{equation}
where $\sgn$ is the signum function, $\mu$ is the mass of the particle, $H(q,p)$ is the Hamiltonian, and $V(q)$ is the interaction potential. We are only interested in the region $\Omega = \Omega_q \times \Omega_p$ where \eqref{eq:ctoa} is real-valued. In this region, we can expand $T_x$ about the free particle solution and get the local time of arrival $t_x(q,p)$ \cite{galapon2004}
\begin{equation} \label{eq:ltoa}
	t_x(q,p) = \sum_{k=0}^\infty (-1)^k T_k(q,p;x) \, ,
\end{equation}
where $T_k(q,p;x)$ satisfies the recurrence relation
\begin{equation} \label{eq:ltoarecur}
\begin{split}
	T_0(q,p;x) &= \frac{\mu}{p} (x - q) \, , \\
	T_k(q,p;x) &= \frac{\mu}{p} \int_q^x \dv{V}{q'} \, \pdv{T_{k-1}(q',p;x)}{p} \dd{q'} \, .
\end{split}
\end{equation}
For $p \neq 0$ and $V$ continuous at $q$, it was shown in \cite{galapon2004} that there exists a neighborhood of $q$ determined by $\abs{V(q) - V(q')} < K_\epsilon \leq p^2(2\mu)^{-1}$ such that for every $x$ in the said neighborhood of $q$, $t_x(q,p)$ converges absolutely and uniformly to $T_x(q,p)$. Since $T_x(q,p)$ holds in the entire $\Omega$, and $t_x(q,p)$ holds only in some local neighborhood $\omega_q$ of $\Omega_q$, then $t_x(q,p) \subset T_x(q,p)$, i.e., $T_x(q,p)$ is the analytic continuation of $t_x(q,p)$ in $\Omega \backslash \omega$. It is this local form of the time of arrival \eqref{eq:ltoa} that was supraquantized in \cite{galapon2004}.

The problem of supraquantizing the time of arrival conjugate with the Hamiltonian is treated in the rigged Hilbert space $\dual{\Phi} \supset \hilbertspace \supset \Phi$, where $\Phi$ is chosen in such a way that $\Phi$ is a dense subset of the domain of the Hamiltonian $\operator{H}$, and is invariant under $\operator{H}$. As with \cite{galapon2004}, we choose $\Phi$ to be the infinitely differentiable complex valued functions with compact support in the real line; $\dual{\Phi}$ is its corresponding dual space. The supraquantized time operator $\mathcal{T}: \Phi \to \dual{\Phi}$ is given by
\begin{equation} \label{eq:tintop}
	(\mathcal{T}\varphi)(q) = \int_{-\infty}^\infty \braket{ q | \mathcal{T} | q' } \varphi(q') \dd{q'} \, .
\end{equation}
Next, the transfer principle is hypothesized, which states that a particular property of one element of a class of observables can be transferred to the rest of the class \cite{galapon2004}. From the free particle solution, the kernel for all continuous potentials is then assumed to take the form
\begin{equation} \label{eq:tk}
	\braket{ q | \mathcal{T} | q' } = \frac{\mu}{i\hbar} T(q,q') \sgn(q - q') \, ,
\end{equation}
where $T(q,q')$ is real valued, symmetric $T(q,q') = T(q',q)$, and analytic.

Let $\rhsextension{H}$ be the extension of $\operator{H}$ in the entire $\dual{\Phi}$, i.e., $\rhsextension{H}: \dual{\Phi} \to \dual{\Phi}$ such that $\braket{\rhsextension{H}\phi|\varphi} = \braket{\phi|\adjoint{H}\varphi}$ for all $\phi$ in $\dual{\Phi}$ and $\varphi$ in $\Phi$, where $\adjoint{H}$ is the adjoint of $\operator{H}$ in $\Phi$. For $\dv*{\mathcal{T}}{t} = -\mathcal{I}$, the canonical commutation relation \eqref{eq:teccr} can be written as
\begin{equation} \label{eq:htccr}
	\braket{ \tilde{\varphi} | [ \rhsextension{H}, \mathcal{T} ] \varphi } = i\hbar \braket{ \tilde{\varphi} | \varphi } \, ,
\end{equation}
for all $\tilde{\varphi}$ and $\varphi$ in $\Phi$. The left hand side gives
\begin{equation} \label{eq:teccrtkegbc}
\begin{split}
	&\braket{ \tilde{\varphi} | [ \rhsextension{H}, \mathcal{T} ] \varphi } = i\hbar \int \tilde{\varphi}^*(q) \qty(\dv{T(q,q)}{q} + \pdv{T(q,q')}{q} \bigg|_{q'=q} + \pdv{T(q,q')}{{q'}} \bigg|_{q'=q}) \varphi(q) \dd{q} \\
	&\qquad + \frac{\mu}{i\hbar} \iint \tilde{\varphi}^*(q) \qty[ -\frac{\hbar^2}{2\mu} \, \pdv[2]{T(q,q')}{q} + V(q)T(q,q') ] \sgn(q-q') \varphi(q') \dd{q'}\dd{q} \\
	&\qquad - \frac{\mu}{i\hbar} \iint \tilde{\varphi}^*(q) \qty[ -\frac{\hbar^2}{2\mu} \, \pdv[2]{T(q,q')}{{q'}} + V(q')T(q,q') ] \sgn(q-q') \varphi(q') \dd{q'}\dd{q} \, .
\end{split}
\end{equation}
The integration is across the common support of $\phi(q)$ and $\tilde{\phi}(q)$. We then need $\Phi$ to be invariant under $\rhsextension{H}$; if $\operator{H}$ is self-adjoint, then this is true for $\Phi$ invariant under $\operator{H}$. The $\mathcal{T}\rhsextension{H}$ term, the last term in \eqref{eq:teccrtkegbc}, is obtained by doing integration by parts and noting that $\phi(q)$ and $\tilde{\phi}(q)$ vanish at the boundary of their respective compact supports.

For this to satisfy the right hand side of \eqref{eq:htccr}, the kernel factor $T(q,q')$ must satisfy the so-called time kernel equation \cite{galapon2004}
\begin{equation} \label{eq:tke}
    -\frac{\hbar^2}{2\mu} \, \pdv[2]{T(q,q')}{q} + \frac{\hbar^2}{2\mu} \, \pdv[2]{T(q,q')}{{q'}} + [V(q) - V(q')] T(q,q') = 0 \, ,
\end{equation}
and the boundary condition
\begin{equation} \label{eq:tkegbc}
	\dv{T(q,q)}{q} + \pdv{T(q,q')}{q} \bigg|_{q'=q} + \pdv{T(q,q')}{{q'}} \bigg|_{q'=q} = 1 \, .
\end{equation}
The boundary conditions along the diagonal of $T(q,q')$ can be fixed in such a way that it satisfies certain properties. For example, if one requires the kernel solution to satisfy conjugacy \eqref{eq:tkegbc}, to be Hermitian, to satisfy time-reversal symmetry, and to approach the local time of arrival at the origin via the Wigner-Weyl transform,
\begin{equation} \label{eq:wignerweyl}
	\mathcal{T}_\hbar(q,p) = \int_{-\infty}^\infty \Braket{ q + \frac{v}{2} | \mathcal{T} | q - \frac{v}{2} } \exp\left(-i\frac{vp}{\hbar}\right) \dd{v} \, , 
\end{equation}
in the classical limit of vanishing $\hbar$, then the specific boundary conditions should be \cite{galapon2004}
\begin{equation} \label{eq:tkebc}
	T(q,q) = \frac{q}{2} \, , \qquad T(q,-q) = 0 \, .
\end{equation}
Solutions of the time kernel equation \eqref{eq:tke} satisfying these boundary conditions \eqref{eq:tkebc} are unique for continuous potentials, and correspond to the quantum time of arrival operator at the origin (if the arrival point is elsewhere, we just shift the potential to move the arrival point to the origin). To illustrate, the following are some time of arrival kernel solutions for linear systems (i.e., linear equations of motion):

\paragraph{Free Particle} For $V(q) = 0$, the solution satisfying \eqref{eq:tke} and \eqref{eq:tkebc} is
\begin{equation} \label{eq:freetkesoln}
    T(q,q') = \frac{1}{4} (q + q') \, .
\end{equation}

\paragraph{Harmonic Oscillator} For $V(q) = \mu\omega^2 q^2/2$,
\begin{equation} \label{eq:hoscitkesoln}
    T(q,q') = \frac{1}{4} \sum_{j=0}^\infty \frac{1}{(2j+1)!} \qty(\frac{\mu \omega}{2\hbar})^{2j} (q+q')^{2j+1} (q-q')^{2j} = \frac{1}{4} \qty(\frac{2\hbar}{\mu\omega}) \frac{1}{q-q'} \sinh\qty(\frac{\mu\omega}{2\hbar}(q+q')(q-q')) \, .
\end{equation}

These solutions give the local time of arrival at the origin upon using the Wigner-Weyl transform \eqref{eq:wignerweyl} and the relation
\begin{equation} \label{eq:fouriertnsgnt}
    \int_{-\infty}^\infty t^n \sgn(t) e^{-it\omega} \dd{t} = \frac{2 n!}{(i\omega)^{n+1}} \, , \qquad \omega \neq 0 \, .
\end{equation}
It was also shown that for nonlinear systems, the solution approaches the local time of arrival for vanishing $\order{\hbar^2}$.

\section{The Boundary Conditions of the Time Kernel Equation} \label{sec:conjugate}
The time of arrival boundary conditions \eqref{eq:tkebc} just give one type of kernel solution which satisfies the general boundary condition \eqref{eq:tkegbc}---it is but one member of the family of conjugate solutions. This was due to the condition that the solution approaches the classical time of arrival in the classical limit. Other Hamiltonian conjugates may correspond to a different kind of boundary conditions along the diagonal.

It will be convenient to rewrite the time kernel equation in canonical form,
\begin{equation} \label{eq:tkecanon}
    -\frac{2 \hbar^2}{\mu} \pdv{T(u,v)}{u}{v} + \qty[ V\qty(\frac{u+v}{2}) - V\qty(\frac{u-v}{2}) ] T(u,v) = 0 \, ,
\end{equation}
where we have changed variables to $u = q + q'$ and $v = q - q'$. Throughout the paper, we will often be using this canonical form to solve for $T(u,v)$ via Frobenius method, where $T(q,q')$ can be retrieved by changing the variables back to $q$ and $q'$.

We will be looking at analytic solutions to \eqref{eq:tkecanon} of the form
\begin{equation} \label{eq:tkesolncanon}
    T(u,v) = \sum_{m,n} \alpha_{m,n} u^m v^n \, ,
\end{equation}
or equivalently, in $qq'$-coordinates,
\begin{equation} \label{eq:tkesoln}
    T(q,q') = \sum_{m,n} \alpha_{m,n} (q+q')^m (q-q')^n \, ,
\end{equation}
for nonnegative $m$ and $n$.

\subsection{Boundary Conditions for the Hamiltonian Conjugate Solutions}

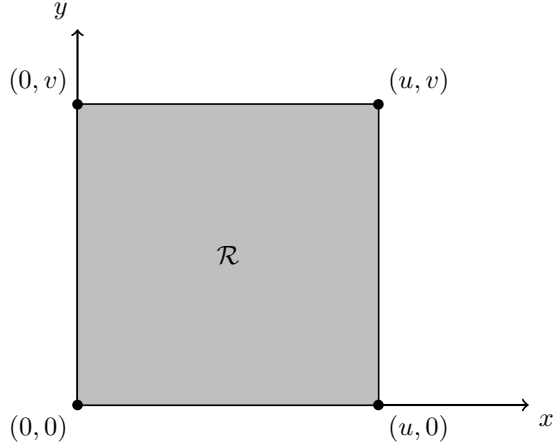
\begin{figure}
\centering

\begin{tikzpicture}

\draw[thick,->] (0,0) -- (6,0) node[anchor=north west] {$x$};
\draw[thick,->] (0,0) -- (0,5) node[anchor=south east] {$y$};

\draw[thick] (0,0) -- (4,0) node[anchor=north west] {$(u,0)$};
\draw[thick] (4,0) -- (4,4) node[anchor=south west] {$(u,v)$};
\draw[thick] (4,4) -- (0,4) node[anchor=south east] {$(0,v)$};
\draw[thick] (0,4) -- (0,0) node[anchor=north east] {$(0,0)$};

\draw[fill=gray!50] (0,0)--(4,0)--(4,4)--(0,4)--cycle;

\node at (2,2) {$\mathcal{R}$};

\fill (0,0) circle (2pt);
\fill (4,0) circle (2pt);
\fill (4,4) circle (2pt);
\fill (0,4) circle (2pt);

\end{tikzpicture}

\caption{Rectangular region $\mathcal{R}$ in the plane defined by $0 \leq x \leq u$ and $0 \leq y \leq v$.} \label{fig:region}
\end{figure}

It would be preferable to rewrite the general condition \eqref{eq:tkegbc} as boundary conditions along the diagonal $q' = q$ and $q' = -q$. This would let us use the methods found in \cite{galapon2004} to construct the other conjugate solutions. We then start by showing the required form of the boundary conditions along the diagonal that is essentially identical to \eqref{eq:tkegbc}. Since we will only be looking at entire analytic potentials, then it would be reasonable to restrict ourselves to analytic solutions of the time kernel equation. In canonical form \eqref{eq:tkecanon} and \eqref{eq:tkesolncanon}, the boundary conditions will be along the axis $u = 0$ and $v = 0$.

\begin{theorem}
Analytic solutions \eqref{eq:tkesolncanon} to the time kernel equation \eqref{eq:tkecanon} satisfy the general boundary condition \eqref{eq:tkegbc} if and only if they satisfy the specific boundary conditions
\begin{equation} \label{eq:tkebchccanon}
    T(u,0) = \frac{u}{4} + c \, , \qquad T(0,v) = g(v) + c \, ,
\end{equation}
where $c$ is a constant, $g$ is some differentiable function, and $g(0)=0$.
\end{theorem}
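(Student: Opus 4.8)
The plan is to translate the mixed boundary condition \eqref{eq:tkegbc}, which is imposed along the line $q'=q$, into conditions on the restrictions of $T$ to the characteristic axes $v=0$ and $u=0$, and then to check that this forces precisely the form \eqref{eq:tkebchccanon}. This is essentially a Goursat-type reformulation: the time kernel equation \eqref{eq:tkecanon} is hyperbolic with characteristics $u=\text{const}$ and $v=\text{const}$, so the natural data are the traces $T(u,0)$ and $T(0,v)$, and \eqref{eq:tkegbc} should constrain them.

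First I would change variables $u=q+q'$, $v=q-q'$, so that $\partial_q=\partial_u+\partial_v$ and $\partial_{q'}=\partial_u-\partial_v$, and the diagonal $q'=q$ becomes $v=0$ with $u=2q$. Under this substitution the two partial-derivative terms $\pdv{T(q,q')}{q}$ and $\pdv{T(q,q')}{q'}$ evaluated at $q'=q$ add to $2\,\partial_u T(u,v)\big|_{v=0}$, the $\partial_v$ pieces cancelling; and the total-derivative term $\dv{T(q,q)}{q}$ becomes $2\,\partial_u T(u,0)\big|_{u=2q}$, because $T(q,q)=T(u,0)$ with $u=2q$ introduces a factor of $2$ from the chain rule. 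Collecting terms, \eqref{eq:tkegbc} is equivalent to $4\,\partial_u T(u,0)=1$, i.e.\ $\partial_u T(u,0)=\tfrac14$ for every $u$.

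For the forward (``only if'') direction I would then integrate in $u$ to obtain $T(u,0)=u/4+c$ with $c:=T(0,0)$, and define $g(v):=T(0,v)-T(0,0)$. Analyticity of the solution \eqref{eq:tkesolncanon}---which in particular lets us restrict the series to each axis and differentiate it termwise---makes $g$ differentiable, and $g(0)=0$ holds by construction, giving \eqref{eq:tkebchccanon}. For the converse, differentiating $T(u,0)=u/4+c$ returns $\partial_u T(u,0)=\tfrac14$, and reversing the change of variables recovers \eqref{eq:tkegbc}; here the assumed form of $T(0,v)$ enters only through the corner compatibility $T(0,0)=c$, which is exactly the condition $g(0)=0$. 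I would stress at this point that no additional constraint on $g$ appears---this freedom in $T(0,v)$ is precisely what produces the larger family of Hamiltonian conjugates analysed in the rest of the section.

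The one place that needs care is the bookkeeping between the total derivative $\dv{T(q,q)}{q}$ and the partial derivatives taken on the diagonal: one must track that differentiating $T(q,q)$ through $u=2q$ yields a factor of $2$, so that the three terms of \eqref{eq:tkegbc} combine with coefficient $4$ rather than $2$ or $3$. This factor is what pins the slope to $1/4$ in $T(u,0)=u/4+c$, consistently with the known time-of-arrival value $T(q,q)=q/2$ (which corresponds to $c=0$). Apart from this, the argument is a routine chain-rule computation, with analyticity invoked only to justify taking traces on the axes and differentiating under the sum.
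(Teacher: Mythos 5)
Your proof is correct, and the computation at its heart checks out: with $u=q+q'$, $v=q-q'$ one has $\partial_q=\partial_u+\partial_v$ and $\partial_{q'}=\partial_u-\partial_v$, so the two partial-derivative terms in \eqref{eq:tkegbc} sum to $2\,\partial_u T|_{v=0}$ with the $\partial_v$ pieces cancelling, the total-derivative term contributes another $2\,\partial_u T(u,0)|_{u=2q}$ via the chain rule, and \eqref{eq:tkegbc} collapses to $\partial_u T(u,0)=1/4$, pinning $T(u,0)=u/4+c$ while leaving $T(0,v)$ free up to the corner condition $T(0,0)=c$. Your route is, however, genuinely different from the paper's. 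The paper proves the forward direction by substituting the power series \eqref{eq:tkesoln} into \eqref{eq:tkegbc} and matching coefficients (obtaining $\sum_m\alpha_{m,0}2^{m+1}mq^{m-1}=1$, hence $\alpha_{1,0}=1/4$ and $\alpha_{m,0}=0$ for $m\ge 2$), and proves the converse by passing to the integral form \eqref{eq:tkeinthccanon} of the time kernel equation and applying the Leibniz rule to recover the three terms of \eqref{eq:tkegbc} (the $\pm\beta_1$ contributions there are exactly your cancelling $\partial_v$ pieces). Your single chain-rule identity gives both directions at once and makes structurally transparent \emph{why} the condition constrains only the trace on $v=0$ and not on $u=0$; it also never invokes the PDE or its integral form, so it applies to any $C^1$ function, not just analytic solutions. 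What the paper's coefficient-level computation buys in exchange is the explicit dictionary $\alpha_{1,0}=1/4$, $\alpha_{m,0}=0$ ($m\ge2$), $\alpha_{0,k}=\beta_k$ that is then fed directly into the Frobenius constructions of the later sections.
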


\begin{proof}
Going back to $(q,q')$, we substitute the assumed form of the solution \eqref{eq:tkesoln} to the general boundary condition \eqref{eq:tkegbc} to get
\begin{equation}
    \sum_m \alpha_{m,0} 2^{m+1} m q^{m-1} = 1 \, .
\end{equation}
To satisfy this, we need to set $\alpha_{1,0} = 1/4$, $\alpha_{2,0} = \alpha_{3,0} = \cdots = 0$. We can set the $\alpha_{0,n}$'s to be any constant for $n\geq0$. Thus, for $\alpha_{0,0} = c$, $\alpha_{0,k} = \beta_k$, and $g(v) = \sum_{k=1}^\infty \beta_k v^k$, we see that \eqref{eq:tkesoln} gives $T(q,q) = \alpha_{0,0} + \alpha_{1,0} (2q) = c + q/2$ and $T(q,-q) = \sum_n \alpha_{0,n} (2q)^n = c + g(2q)$. In canonical form, we get $T(u,0) = u/4 + c$ and $T(0,v) = g(v) + c$, verifying \eqref{eq:tkebchccanon}.

Conversely, if $T(u,v)$ satisfies \eqref{eq:tkebchccanon}, then from \eqref{eq:tkecanon}, the integral form of the time kernel equation gives
\begin{equation} \label{eq:tkeinthccanon}
    T(u,v) = \frac{u}{4} + g(v) + c +  \frac{\mu}{2\hbar^2} \int_0^v \int_0^u \qty[V\qty(\frac{x+y}{2}) - V\qty(\frac{x-y}{2}) ] T(x,y) \dd{x} \dd{y} \, , 
\end{equation}
and, in $(q,q')$,
\begin{equation} \label{eq:tkeinthc}
\begin{split}
    T(q,q') &= \frac{1}{4}(q+q') + \sum_{k=1}^\infty \beta_k (q-q')^k + c \\
    &\quad + \frac{\mu}{2\hbar^2} \int_0^{q-q'} \int_0^{q+q'} \qty[V\qty(\frac{x+y}{2}) - V\qty(\frac{x-y}{2}) ] T(x,y) \dd{x} \dd{y} \, , 
\end{split}
\end{equation}
where, again, we let $g(v) = \sum_{k=1}^\infty \beta_k v^k$. The integration is evaluated over the region in the plane defined by $0 \leq x \leq u$ and $0 \leq y \leq v$ (see Figure \ref{fig:region}). Then, using the Leibniz integral rule, we get $\dv{q}T(q,q) = 1/2$, $\pdv{q}T(q,q')|_{q'=q} = 1/4 + \beta_1$, and $\pdv{q'}T(q,q')|_{q'=q} = 1/4 - \beta_1$, i.e., $T(q,q')$, and hence $T(u,v)$, satisfies the general boundary condition \eqref{eq:tkegbc}.
\end{proof}

In $qq'$-coordinates, the boundary conditions \eqref{eq:tkebchccanon} can be written as
\begin{equation} \label{eq:tkebchc}
    T(q,q) = \frac{q}{2} + c \qquad\text{and}\qquad T(q,-q) = g(2q) + c \, ,
\end{equation}
where $c$ is a constant, $g$ is some differentiable function, and $g(0)=0$. Note that we have the awkward notation $g(2q)$, since, from the canonical form, we expect it to be $g(v) \to g(q-q')$, and the $T(q,-q)$ condition makes this $g(2q)$.

Therefore, solutions to the time kernel equation \eqref{eq:tke} satisfying the boundary conditions \eqref{eq:tkebchc} also satisfy the general boundary condition \eqref{eq:tkegbc}, and thus corresponds to operators canonically conjugate to the Hamiltonian. Conversely, solutions satisfying the general condition \eqref{eq:tkegbc} also satisfy \eqref{eq:tkebchc}. This means that the boundary condition \eqref{eq:tkebchc} constitute the \textit{same} family of Hamiltonian conjugates as that of \eqref{eq:tkegbc}. One could then use Frobenius method to construct solutions of the time kernel equation \eqref{eq:tke} satisfying \eqref{eq:tkebchc}. The Wigner-Weyl transform \eqref{eq:wignerweyl} will be used to see its classical limit.

\subsection{Existence and Uniqueness}

Note that the rest of the $\alpha_{m,n}$'s are obtained from a recurrence relation imposed by the time kernel equation. The above theorem assumed that this analytic solution exists. We shall focus our attention only on continuous potentials. We prove below the existence and uniqueness of the solution for these types of potentials \cite{freiling2008} (using analogous steps in \cite{domingo2004}).

\begin{theorem} \label{thm:tkeunique}
If $V(q)$ is a continuous function at any point of the real line, then there exists a unique continuous solution $T(q,q')$ to the time kernel equation \eqref{eq:tke} satisfying the boundary conditions \eqref{eq:tkebchc}.
\end{theorem}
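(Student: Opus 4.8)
The plan is to convert the boundary-value problem into an equivalent integral equation and then solve it by Picard-type iteration (successive approximations), exactly the strategy suggested by equation \eqref{eq:tkeinthccanon}. Working in canonical coordinates $(u,v)$, I would take \eqref{eq:tkeinthccanon} as the definition of the problem: a continuous function $T(u,v)$ on a rectangle $\mathcal{R} = [0,U]\times[0,V]$ solves the time kernel equation \eqref{eq:tkecanon} with boundary data \eqref{eq:tkebchccanon} if and only if it is a fixed point of the integral operator
\begin{equation*}
(\mathcal{K}T)(u,v) = \frac{u}{4} + g(v) + c + \frac{\mu}{2\hbar^2}\int_0^v\int_0^u \qty[V\qty(\tfrac{x+y}{2}) - V\qty(\tfrac{x-y}{2})] T(x,y)\,\dd{x}\dd{y}.
\end{equation*}
The equivalence is the easy direction: differentiating the integral equation in $u$ and $v$ recovers \eqref{eq:tkecanon}, and setting $u=0$ or $v=0$ recovers \eqref{eq:tkebchccanon}; conversely integrating \eqref{eq:tkecanon} twice and using the boundary data produces $\mathcal{K}$.

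Next I would set up the iteration $T_0(u,v) = \tfrac{u}{4} + g(v) + c$ and $T_{k+1} = \mathcal{K}T_k$, and show it converges uniformly on any bounded rectangle $\mathcal{R}$. Let $M = \sup_{\mathcal{R}}\abs{V(\tfrac{x+y}{2}) - V(\tfrac{x-y}{2})}$, finite since $V$ is continuous hence bounded on the relevant compact set, and write $\lambda = \mu M/(2\hbar^2)$. A standard induction on the area of the domain of integration gives the bound $\abs{T_{k+1}(u,v) - T_k(u,v)} \leq \|T_1 - T_0\|_\infty \,\lambda^k (uv)^k/(k!)^2$, because each application of $\mathcal{K}$ introduces one more $\int_0^v\int_0^u$ against a monomial $x^k y^k$, which integrates to $u^{k+1}v^{k+1}/((k+1)!)^2$ up to the factor $\lambda$. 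Hence $\sum_k \|T_{k+1}-T_k\|_\infty$ is dominated by a convergent series (bounded by $\|T_1-T_0\|_\infty \sum_k \lambda^k (UV)^k/(k!)^2$, an entire function of its argument), so $T_k$ converges uniformly to a continuous limit $T$, which by continuity of $\mathcal{K}$ on $C(\mathcal{R})$ satisfies $T = \mathcal{K}T$. Since the rectangle was arbitrary, $T$ extends to a continuous solution on all of $[0,\infty)^2$, and by the symmetry $T(q,q')=T(q',q)$ (i.e. evenness in $v$) and reflection we obtain the solution on the full domain; translating back to $(q,q')$ gives the claimed $T(q,q')$.

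For uniqueness, I would suppose $T$ and $\tilde{T}$ are two continuous solutions with the same boundary data; their difference $D = T - \tilde{T}$ satisfies $D = \mathcal{K}_0 D$ where $\mathcal{K}_0$ is the homogeneous part of $\mathcal{K}$ (no $\tfrac{u}{4}+g(v)+c$ term). Iterating, $\abs{D(u,v)} \leq \|D\|_{\infty,\mathcal{R}}\,\lambda^k (uv)^k/(k!)^2 \to 0$ as $k\to\infty$ for each fixed $(u,v)$ in any bounded rectangle, so $D \equiv 0$ there, and again the rectangle was arbitrary. The main obstacle I anticipate is not any single step but keeping the bookkeeping honest: making sure the double-integral estimates are stated with the correct combinatorial denominators $(k!)^2$ (not $(2k)!$ or $k!$), handling the fact that $g$ is merely differentiable rather than analytic (so that the argument genuinely uses only continuity/boundedness of $V$ and regularity of the boundary data, consistent with the theorem's hypothesis on $V$ being merely continuous), and verifying the regularity of the resulting $T$ — the integral equation only immediately yields continuity and the mixed second derivative $\partial_u\partial_v T$, which is exactly what \eqref{eq:tkecanon} requires, so I would be careful to claim only that and not spurious smoothness. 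A secondary point to check is consistency at the corner $(0,0)$: the two boundary expressions in \eqref{eq:tkebchccanon} must agree there, which is precisely the stipulation $g(0)=0$, so this is built into the hypotheses and needs only a one-line remark.
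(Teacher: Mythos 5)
Your proposal is correct and follows essentially the same route as the paper: recast the problem as the integral equation \eqref{eq:tkeinthccanon}, run Picard iteration from $T_0=u/4+g(v)+c$, bound the successive differences by a factorially decaying series to get uniform convergence on bounded rectangles, and prove uniqueness by iterating the homogeneous estimate on the difference of two solutions. The only deviation is cosmetic bookkeeping (your $(k!)^2$ denominators versus the paper's cruder $j!$ bounds carrying the $\beta_k$ series explicitly), which does not change the argument.
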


\begin{proof}
In canonical form, we have the time kernel equation \eqref{eq:tkecanon} and the boundary conditions \eqref{eq:tkebchccanon}, with the integral form of the time kernel equation given by \eqref{eq:tkeinthccanon}.

Let $T(u,v) = \lim_{n \to \infty} T_n(u,v)$, where
\begin{equation}
    T_0(u,v) = \frac{u}{4} + g(v) + c \, ,
\end{equation}
and
\begin{equation}
    T_n(u,v) = \frac{u}{4} + g(v) + c + \frac{\mu}{2\hbar^2} \int_0^v \int_0^u \qty[V\qty(\frac{x+y}{2}) - V\qty(\frac{x-y}{2}) ] T_{n-1}(x,y) \dd{x} \dd{y}  \, .
\end{equation}
We can also write $T_n$ as
\begin{equation}
    T_n(u,v) = \frac{u}{4} + g(v) + c + \sum_{j=1}^n \qty[T_j(u,v) - T_{j-1}(u,v)] \, .
\end{equation}
Since $V(q)$ is continuous, there exists an $M > 0$ such that $\abs{V\qty(\frac{x+y}{2}) - V\qty(\frac{x-y}{2})} \leq M$ for any point $(x,y)$ in the region $\mathcal{R}$ in Figure \ref{fig:region}. Since $g$ is differentiable, we could write it as $g(v) = \sum_{k=1}^\infty \beta_{k} v^k$. To simplify things, we let $g(v) + c = \sum_{k=0}^\infty \beta_{k} v^k$, where $\beta_0 = c$. We then have,
\begin{align}
    \abs{T_1(u,v) - T_0(u,v)} &\leq \frac{\mu M}{2 \hbar^2} \int_0^v \int_0^u \abs{\frac{x}{4} + \sum_k \beta_k y^k} \dd{x} \dd{y} \nonumber \\
    &\leq \frac{\mu M}{2 \hbar^2} \int_0^v \int_0^u \qty(\abs{x} + \sum_k \abs{\beta_k}\abs{y^k}) \dd{x} \dd{y} \nonumber \\
    &\leq \frac{\mu M}{2 \hbar^2} \qty(\abs{u}^2 \abs{v} + \sum_k \abs{\beta_k} \abs{u} \abs{v}^{k+1} \frac{1}{k+1}) \nonumber \\
    &\quad = \frac{\mu M}{2 \hbar^2} \abs{u}\abs{v} \qty(\abs{u} + \sum_k \abs{\beta_k} \frac{\abs{v}^k}{k+1}) \, .
\end{align}
Similarly,
\begin{align}
    \abs{T_2(u,v) - T_1(u,v)} &\leq \frac{\mu M}{2 \hbar^2} \int_0^v \int_0^u \abs{T_1(x,y) - T_0(x,y)} \dd{x} \dd{y} \nonumber \\
    &\leq \qty(\frac{\mu M}{2 \hbar^2})^2 \frac{\abs{u}^2 \abs{v}^2}{2} \qty(\abs{u} + \sum_k \abs{\beta_k} \frac{\abs{v}^k}{(k+1)(k+2)}) \, ,
\end{align}
and,
\begin{equation}
    \abs{T_3(u,v) - T_2(u,v)} \leq \qty(\frac{\mu M}{2 \hbar^2})^3 \frac{\abs{u}^3 \abs{v}^3}{2 \cdot 3} \qty(\abs{u} + \sum_k \abs{\beta_k} \frac{\abs{v}^k}{(k+1)(k+2)(k+3)}) \, .
\end{equation}
By induction,
\begin{equation}
    \abs{T_j - T_{j-1}} \leq \qty(\frac{\mu M}{2 \hbar^2})^j \frac{\abs{u}^j \abs{v}^j}{j!} \qty(\abs{u} + \sum_k \abs{\beta_k} \frac{\abs{v}^k}{(k+1)(k+2)\dotsm(k+j)} ) \, .
\end{equation}
We see that this goes to zero as $j$ approaches infinity. Therefore, $T_n(u,v)$ is absolutely and uniformly convergent for all finite values of $u$ and $v$, and the limit $T(u,v)$ defines the solution to the time kernel equation.

Now, suppose $T_a(u,v)$ and $T_b(u,v)$ are two solutions to the time kernel equation. The existence of a continuous solution implies that there exists a $K > 0$ such that $\abs{T_a(u,v)} \leq K$ and $\abs{T_b(u,v)} \leq K$ in the bounded region in Figure \ref{fig:region}. Then, from their integral forms \eqref{eq:tkeinthccanon}, we get as a first approximation,
\begin{align}
    \abs{T_a(u,v) - T_b(u,v)} &\leq \frac{\mu M}{2 \hbar^2} \int_0^v \int_0^u \abs{T_a(x,y) - T_b(x,y)} \dd{x} \dd{y} \nonumber \\
    &\leq \frac{\mu M}{2 \hbar^2} (2K) \abs{u}\abs{v} \, .
\end{align}
We substitute this back to get a second approximation,
\begin{equation}
    \abs{T_a(u,v) - T_b(u,v)} \leq \qty(\frac{\mu M}{2\hbar^2})^2 (2K) \frac{\abs{u}^2 \abs{v}^2}{2} \, .
\end{equation}
By induction, the $n$th approximation gives
\begin{equation}
    \abs{T_a(u,v) - T_b(u,v)} \leq \qty(\frac{\mu M}{2\hbar^2})^n (2K) \frac{\abs{u}^n \abs{v}^n}{n!} \, .
\end{equation}
This approaches zero as $n$ approaches infinity. Therefore, the solution is unique.
\end{proof}

\subsection{Hermiticity and Time-Reversal Symmetry} \label{subsec:hermtime}

Since we are interested in constructing Hermitian operators conjugate to the Hamiltonian, we need to impose additional conditions to the time kernel equation solutions to satisfy Hermiticity. To compare with the time of arrival solution, we will also be looking at the conditions for time-reversal symmetry.

\paragraph{Hermiticity} It would be sufficient to construct conjugate operators that are Hermitian. The supraquantized time operator is Hermitian if $\adjoint{\mathcal{T}} = \mathcal{T}$. From \eqref{eq:tintop} and \eqref{eq:tk} in canonical form, the Hermiticity condition can be rewritten as
\begin{equation}
    T(u,v) = T^*(u,-v) \, .
\end{equation}
We look at the integral form of $T(u,v)$ in \eqref{eq:tkeinthccanon}, and get
\begin{equation}
    T^*(u,-v) = \frac{u}{4} + g^*(-v) + c^* +  \frac{\mu}{2\hbar^2} \int_0^v \int_0^u \qty[V\qty(\frac{x+y}{2}) - V\qty(\frac{x-y}{2}) ]^* T^*(x,-y) \dd{x} \dd{y} \, .
\end{equation}
So, if $T(u,v) = T^*(u,-v)$ (and taking $T(x,y) = T^*(x,-y)$ to be true inside the integral as well), then a sufficient condition will be obtained by equating each term, giving
\begin{equation}
    g(v) = g^*(-v) \, ,
\end{equation}
\begin{equation} \label{eq:chermitian}
    c = c^* \, ,
\end{equation}
\begin{equation} \label{eq:vhermitian}
    V(q) = V^*(q) \, .
\end{equation}
With $g(v) = \sum_k \beta_k v^k$, we can rewrite the first condition as
\begin{equation} \label{eq:bhermitian}
    \beta_k = \beta_k^* (-1)^k \, .
\end{equation}
Thus, $c$ and $V(q)$ should be purely real, and the coefficients of the expansion of $g(v) = \sum_k \beta_k v^k$ should satisfy $\beta_k = \beta_k^* (-1)^k$. One way of satisfying the condition for $\beta_k$ is by letting $\beta_k = i^k \tilde{\beta}_k$, where $\tilde{\beta}_k$ is purely real. Another way is by letting $\beta_k$ vanish for odd $k$ and letting $\beta_k$ be purely real for even $k$.

\paragraph{Time-reversal symmetry} For completeness, we shall look at the conditions for time-reversal symmetry. The time operator satisfies time reversal symmetry if $\Theta\mathcal{T}\Theta^{-1} = -\mathcal{T}$ where $\Theta$ is the time-reversal operator, which in canonical form, gives
\begin{equation}
    T(u,v) = T^*(u,v) \, .
\end{equation}
Similar with the Hermiticity conditions, we can get the conditions for time-reversal symmetry to be
\begin{equation}
    g(v) = g^*(v) \, ,
\end{equation}
\begin{equation} \label{eq:ctimereversal}
    c = c^* \, ,
\end{equation}
\begin{equation} \label{eq:vtimereversal}
    V(q) = V^*(q) \, .
\end{equation}
Thus, we require that $g(v)$, $c$, and $V(q)$ be purely real. The first condition is equivalent to having its power series coefficients satisfy
\begin{equation} \label{eq:btimereversal}
    \beta_k = \beta_k^* \, ,
\end{equation}
for all $k$, that is, the $\beta_k$'s are all purely real.

\paragraph{Hermiticity and Time-reversal symmetry} The operator will satisfy both Hermiticity and time-reversal symmetry if the time kernel equation solution $T(u,v)$ satisfies all the above conditions. We see that $g$ should satisfy
\begin{equation}
    g(v) = g^*(v) = g^*(-v) \, ,
\end{equation}
meaning that it is purely real and even. Looking at its expansion coefficients, both $\beta_k$ conditions \eqref{eq:bhermitian} and \eqref{eq:btimereversal} are satisfied if
\begin{equation}
    (-1)^k = 1 \, ,
\end{equation}
which can only be satisfied for even $k$. Therefore, the solution is both Hermitian and satisfies time reversal symmetry if
\begin{equation}
    \beta_k = 0 \, , \qquad \text{for odd $k$,}
\end{equation}
\begin{equation}
    \beta_k = \beta_k^* \, , \qquad \text{for even $k$,}
\end{equation}
\begin{equation} 
    c = c^* \, ,
\end{equation}
\begin{equation} 
    V(q) = V^*(q) \, .
\end{equation}
where the $\beta_k$'s are from $g(v) = \sum_k \beta_k v^k$.

\paragraph{General Free Particle Solution} Let's take a look at the general solution \eqref{eq:tkeinthccanon} for the free particle case $V(q) = 0$,
\begin{equation}
    T(u,v) = \frac{u}{4} + c + g(v) \, ,
\end{equation}
which gives
\begin{equation}
    \braket{q|\mathcal{T}|q'} = \frac{\mu}{i\hbar}\sgn(q-q) \qty(\frac{q+q'}{4} + c + g(q-q')) \, .
\end{equation}
The Wigner-Weyl transform \eqref{eq:wignerweyl} gives
\begin{equation}
    \mathcal{T}_\hbar(q,p) = -\frac{\mu q}{p} - \frac{2\mu}{p}c + \frac{\mu}{i\hbar}\int_{-\infty}^\infty g(v) \sgn(v) \exp\qty(-iv\frac{p}{\hbar})\dd{v} \, ,
\end{equation}
where \eqref{eq:fouriertnsgnt} was used. Using $g(v) = \sum_{k=1}^\infty \beta_k v^k$, the above equation becomes
\begin{equation}
    \mathcal{T}_\hbar(q,p) = -\frac{\mu q}{p} - \frac{2\mu}{p}c - \frac{2\mu}{p} \sum_{k=1}^\infty \beta_k \frac{k!}{i^k} \qty(\frac{\hbar}{p})^k \, .
\end{equation}
For purely real $c$, the first two terms satisfy Hermiticity and time-reversal symmetry. The time-reversal symmetry condition of $\beta_k = \beta_k^*$ leaves the third term to have imaginary components. The Hermiticity condition removes the imaginary components, for example by having $\beta_k = i^k \tilde{\beta}_k$ for purely real $\tilde{\beta}_k$. Both Hermiticity and time-reversal symmetry conditions leave the third term to be purely real and only have terms of order $\order{\hbar^{2k}}$. Analysis of other potentials will become tedious since method of successive approximation will be required, but one can continue hypothesizing the transfer principle so that these properties carry over.

\section{Explicit Examples} \label{sec:examples}
One simple way of generating other Hamiltonian conjugates is by adding a term which commutes with the Hamiltonian, e.g., when $\mathcal{T}_0$ is canonically conjugate to the Hamiltonian, then so is $\mathcal{T}_0 + \mathcal{T}_C$, where $\mathcal{T}_C$ commutes with the Hamiltonian; one example is $\mathcal{T}_C = f(\operator{H})$ where $f$ is some suitable function of the Hamiltonian. Note that the Hamiltonians here are extended to the rigged Hilbert space, though starting here we drop the $\cross$ notation for convenience. Thus, we get other possible kernel solutions of the form
\begin{equation} \label{eq:tkc}
	T(q,q') = T_0(q,q') + T_C(q,q') \, ,
\end{equation}
where $T_0(q,q')$ corresponds to a Hamiltonian conjugate, and $T_C(q,q')$ corresponds to an operator that commutes with the Hamiltonian. From \eqref{eq:teccrtkegbc}, we must require the kernel factor $T_C(q,q')$ in \eqref{eq:tkc} to satisfy the time kernel equation
\begin{equation} \label{eq:tkec}
    -\frac{\hbar^2}{2\mu} \, \pdv[2]{T_C(q,q')}{q} + \frac{\hbar^2}{2\mu} \, \pdv[2]{T_C(q,q')}{{q'}} + [V(q) - V(q')] T_C(q,q') = 0 \, ,
\end{equation}
and the boundary condition
\begin{equation} \label{eq:tkemgbc}
	\dv{T_C(q,q)}{q} + \pdv{T_C(q,q')}{q} \bigg|_{q'=q} + \pdv{T_C(q,q')}{{q'}} \bigg|_{q'=q} = 0 \, ,
\end{equation}
where the right hand side is now zero (cf. \eqref{eq:tkegbc}). This $T_C(q,q')$ solution corresponds to operators $\mathcal{T}_C$ which commute with the Hamiltonian, i.e., $\mathcal{T}_C$ is a constant of motion. 

We now proceed with examples of other Hamiltonian conjugate solutions. We look at different boundary conditions along the diagonal of the form \eqref{eq:tkebchc} which satisfy the Hermiticity conditions \eqref{eq:bhermitian}, \eqref{eq:chermitian}, and \eqref{eq:vhermitian}. However, for all the examples except the last one, we do not impose time-reversal symmetry. In this paper, we are mostly interested in operators which are Hermitian and are conjugate to the Hamiltonian.

\subsection{Reciprocal of the Hamiltonian}
\paragraph{Free Particle} Consider, for $V(q)=0$, the solution
\begin{equation}
    T(q,q') = \frac{1}{4}(q+q') - i \beta (q-q') \, ,
\end{equation}
where $\beta$ is some constant. The first term is the usual free time of arrival solution. The second term is also a solution to the free time kernel equation which satisfies \eqref{eq:tkemgbc} instead. Note that $T(q,q') = T^*(q',q)$. This solution satisfies the boundary conditions
\begin{equation} \label{eq:tkehbc}
    T(q,q) = \frac{q}{2} \qquad \text{and} \qquad T(q,-q) = - 2 i \beta q \, .
\end{equation}
This is just \eqref{eq:tkebchc} with $c=0$ and $g(2q) = -i\beta (2q)$. This means that the solution is unique and corresponds to a Hamiltonian conjugate. From Section \ref{subsec:hermtime}, for purely real $\beta$, the solution is Hermitian, which one clearly sees from $T(q,q')$; however, the solution does not satisfy time-reversal symmetry since $T(q,-q)$ is purely imaginary.

The solution is in the form $T(q,q') = T_{\text{TOA}}(q,q') + T_C(q,q')$, where $T_{\text{TOA}}$ is the usual time of arrival solution, and $T_C$ is a solution satisfying \eqref{eq:tkemgbc}. Using \eqref{eq:tk}, we get the kernel
\begin{equation}
    \braket{ q | \mathcal{T} | q' } = \frac{\mu}{4i\hbar} (q + q') \sgn(q - q') - \frac{\mu \beta}{\hbar} (q - q') \sgn(q - q') \, ,
\end{equation}
which is in the form 
The Wigner-Weyl transform \eqref{eq:wignerweyl} of this kernel is given by
\begin{equation}
    \mathcal{T}_\hbar(q,p) = -\frac{\mu q}{p} + \beta \frac{2\mu\hbar}{p^2} \, ,
\end{equation}
for $p \neq 0$, where we have used the relation \eqref{eq:fouriertnsgnt}. We see that the first term is the free classical time of arrival (at the origin), while the second term $T_C$ becomes $\beta\hbar H^{-1}$, where $H(q,p) = p^2/2\mu$ is the free Hamiltonian. In the classical limit, this second term vanishes as $\hbar \to 0$ unless $\beta$ is sufficiently large, in the sense that $\beta \propto \hbar^{-1}$.

Note that $\hbar/H$ has the correct units of time. Therefore, the boundary conditions \eqref{eq:tkehbc} correspond to the time of arrival \textit{shifted} by $\beta\hbar H^{-1}$. One could interpret the shift term in the kernel solution to be the supraquantization of $\beta\hbar H^{-1}$, i.e., for free particle,
\begin{equation}
    \beta\hbar \braket{q|\mathcal{T}_C|q'} = - \frac{\mu \beta}{\hbar} (q - q') \sgn(q - q') \, .
\end{equation}
And so, when we set $\beta = 0$, we will obviously recover the time of arrival solution as \eqref{eq:tkehbc} becomes the time of arrival boundary conditions. When we apply the transfer principle \cite{galapon2004}, the boundary conditions \eqref{eq:tkehbc} must then correspond to a solution in the form of $T_{\text{TOA}} + \beta\hbar H^{-1}$ for all continuous potentials.

\paragraph{Harmonic Oscillator} The boundary conditions \eqref{eq:tkehbc} become
\begin{equation}
    T(u,0) = \frac{u}{4} \qquad \text{and} \qquad T(0,v) = -i\beta v \, ,
\end{equation}
in canonical form. We solve for $T(u,v)$ using Frobenius method, where the solution is of the form \eqref{eq:tkesolncanon}, and substituting to the time kernel equation \eqref{eq:tkecanon} will give a recurrence relation for $\alpha_{m,n}$ which depends on $V(q)$. The boundary conditions impose that
\begin{equation}
    \alpha_{m,0} = \frac{1}{4} \delta_{m,1} \qquad\text{and}\qquad \alpha_{0,n} = -i\beta \delta_{n,1} \, .
\end{equation}
We also let $\alpha_{m,n} = 0$ for $m < 0$ or $n < 0$ so that the solution is analytic.

For $V(q) = \frac{1}{2} \mu \omega^2 q^2$, the time kernel equation gives the recurrence relation
\begin{equation}
    \alpha_{m,n} = \frac{\mu^2 \omega^2}{4\hbar^2} \frac{1}{mn} \alpha_{m-2,n-2} \, .
\end{equation}
The condition $\alpha_{m,0} = \frac{1}{4} \delta_{m,1}$ produces nonvanishing values for $\alpha_{1,0}$, $\alpha_{3,2}$, $\alpha_{5,4}$, and so on---these terms are just that of the time of arrival solution $T_{\text{TOA}}$. The second condition $\alpha_{0,n} = -i\beta \delta_{n,1}$ produces a separate branch of nonvanishing terms
\begin{equation}
    \alpha_{0,1} = -i\beta \, ,
\end{equation}
\begin{equation}
    \alpha_{2,3} = -i\beta \frac{\mu^2 \omega^2}{4\hbar^2} \qty(\frac{1}{2}) \qty(\frac{1}{3}) \, ,
\end{equation}
\begin{equation}
    \alpha_{4,5} = -i\beta \qty(\frac{\mu^2 \omega^2}{4\hbar^2})^2 \qty(\frac{1}{2 \cdot 4}) \qty(\frac{1}{3 \cdot 5}) \, ,
\end{equation}
and so on. We can simplify this to a single index relation,
\begin{equation}
    \alpha_j = -i\beta \qty(\frac{\mu\omega}{2\hbar})^{2j} \frac{1}{(2j+1)!} \, ,
\end{equation}
and thus, the solution is then, upon changing back to $(q,q')$,
\begin{equation}
    T(q,q') = T_{\text{TOA}}(q,q') - i\beta \sum_{j=0}^\infty \frac{1}{(2j+1)!} \qty(\frac{\mu\omega}{2\hbar})^{2j} (q+q')^{2j} (q-q')^{2j+1} \, ,
\end{equation}
where the first term $T_{\text{TOA}}(q,q')$ is the usual harmonic oscillator time of arrival solution. One then gets the kernel \eqref{eq:tk} by multiplying $(\mu/i\hbar)\sgn(q-q')$. Upon using the Wigner-Weyl transform \eqref{eq:wignerweyl}, we get
\begin{equation}
    \mathcal{T}_\hbar(q,p) = t_{\text{TOA}}(q,p) + \beta \frac{2\mu\hbar}{p^2} \sum_{j=0}^\infty (-1)^j \qty(\mu^2 \omega^2 \frac{q^2}{p^2})^j \, , 
\end{equation}
where \eqref{eq:fouriertnsgnt} was used. We see again that the first term is the local time of arrival at the origin for the harmonic oscillator case. Meanwhile, the second term is a series expansion of $\beta\hbar H^{-1}$ for the harmonic oscillator Hamiltonian $H(q,p) = p^2/2\mu + \mu\omega^2 q^2 / 2$. This converges for $q$ close to the origin, in accordance with the local time of arrival (the shifting term converges for $\abs{q}<p/\mu\omega$). We therefore arrive at the shifted time of arrival as well for sufficiently large $\beta$.

The time kernel equation \eqref{eq:tke} and the boundary conditions \eqref{eq:tkehbc} then constitute the supraquantization of the series expansion of the classical time of arrival at the origin shifted by $\beta\hbar H^{-1}$. These supraquantized operators are Hermitian and are canonically conjugate to the their corresponding Hamiltonians. The shifting term takes the form
\begin{equation}
    \beta \hbar \braket{q|\mathcal{T}_C|q'} = \frac{\mu}{i\hbar} T_C(q,q') \sgn(q-q') \, .
\end{equation}
For the harmonic oscillator case, we see that
\begin{equation}
    \braket{q|\mathcal{T}_C|q'} = -\frac{\mu}{\hbar^2} \sum_{j=0}^\infty \frac{1}{(2j+1)!} \qty(\frac{\mu\omega}{2\hbar})^{2j} (q+q')^{2j} (q-q')^{2j+1} \sgn(q-q') \, .
\end{equation}

These solutions are a manifestation of the multiple solutions of $[\operator{H},\mathcal{T}] = \pm i\hbar$, i.e., we can add any multiple of $\hbar \mathcal{T}_C$ to $\mathcal{T}$ and the commutation relation will still be satisfied. The multiple solutions to the time-energy canonical commutation relation can be differentiated by looking at their dynamics and their internal symmetry \cite{caballar2009,caballar2010}. For example, one can see that if $\beta \neq 0$, then $\braket{q|\mathcal{T}|q'}^* \neq -\braket{q|\mathcal{T}|q'}$, i.e., time reversal symmetry is broken. This is in contrast with the time of arrival solution (the $\beta = 0$ case) where time-reversal symmetry is satisfied.

Finally, we note that when we set the first condition to $T(q,q) = 0$ (setting this gets rid of the time of arrival term), we have a method of constructing the supraquantization of just the reciprocal of the Hamiltonian. One can see that, for linear systems, its supraquantization $\braket{q|\mathcal{T}_C|q'}$ is equivalent to the corresponding Weyl quantization $\braket{q|\operator{H}^{-1}|q'}$ \cite{caballar2010,magadan2018}, consistent with the supraquantization of the local time of arrival being equal to its Weyl quantized form \cite{galapon2004}. However, it is only equal to the leading term for nonlinear systems.

\subsection{Negative powers of the Hamiltonian}
We can generalize the above example and consider the boundary conditions
\begin{equation} \label{eq:tkehnbc}
    T(q,q) = \frac{q}{2} \qquad \text{and} \qquad T(q,-q) = - i^{2N-1} \beta \mu^{-2(N-1)} (2q)^{2N-1} \, .
\end{equation}
for any integer $N \geq 1$. Since these are also a special case of \eqref{eq:tkebchc}, then the solutions also correspond to Hamiltonian conjugates. Since this satisfies \eqref{eq:bhermitian}, i.e., $(i^{2N-1}\tilde{\beta}_{2N-1})^* = (-1)^{2N-1} (i^{2N-1}\tilde{\beta}_{2N-1})$ for purely real $\tilde{\beta}_{2N-1} = \beta$, we also note that the solutions will be Hermitian. However, the solutions will not satisfy time-reversal symmetry.

Changing variables to $u=q+q'$ and $v=q-q'$ gives the conditions in canonical form
\begin{equation}
    T(u,0) = \frac{u}{4} \qquad \text{and} \qquad T(0,v) = -i^{2N-1} \frac{\beta}{\mu^{2N-2}} v^{2N-1} \, .
\end{equation}
Assuming an analytic solution \eqref{eq:tkesolncanon} to the time kernel equation \eqref{eq:tkecanon}, then we get conditions
\begin{equation} \label{eq:tkehnbca}
    \alpha_{m,0} = \frac{1}{4} \delta_{m,1} \qquad\text{and}\qquad \alpha_{0,n} = -i^{2N-1} \frac{\beta}{\mu^{2N-2}} \delta_{n,2N-1} \, ,
\end{equation}
and $\alpha_{m,n} = 0$ for negative $m$ or $n$.

\paragraph{Free Particle} For $V(q) = 0$, the boundary conditions give the solution
\begin{equation}
    T(q,q') = T_{\text{TOA}}(q,q') - i^{2N-1} \frac{\beta}{\mu^{2N-2}} (q-q')^{2N-1} \, ,
\end{equation}
which is just the time of arrival solution plus a $T_C$ term satisfying \eqref{eq:tkemgbc}. Using the Wigner-Weyl transform \eqref{eq:wignerweyl} and relation \eqref{eq:fouriertnsgnt}, we get
\begin{equation}
    \mathcal{T}_\hbar = -\frac{\mu q}{p} + \beta \frac{(2N-1)!}{2^{N-1}} \frac{\hbar^{2N-1}}{\mu^{3(N-1)}} \frac{1}{H^N} \, . 
\end{equation}
Note that second term of the transform, $\hbar^{2N-1} \mu^{-3(N-1)} H^{-N}$, also has the correct units of time. This gives us the free time of arrival shifted by a term proportional to $H^{-N}$, where $H$ is the classical Hamiltonian. It follows that the corresponding kernel solution $T_C$ is its supraquantization, for integer $N \geq 1$. This disappears in the classical limit unless $\beta$ is sufficiently large, like when $\beta \propto \hbar^{-(2N-1)}$.

\paragraph{Harmonic Oscillator} For $V(q) = \frac{1}{2} \mu \omega^2 q^2$, the boundary conditions give a solution
\begin{equation}
    T(q,q') = T_{\text{TOA}} -i^{2N-1}\frac{\beta}{\mu^{2N-2}} \sum_{s=0}^\infty \qty(\frac{\mu\omega}{2\hbar})^{2s} \frac{1}{2^s s!} \frac{\Gamma\qty(N+\frac{1}{2})}{2^s \Gamma\qty(N+\frac{1}{2}+s)} (q+q')^{2s} (q-q')^{2N-1+2s} \, .
\end{equation}
After using \eqref{eq:wignerweyl} and \eqref{eq:fouriertnsgnt}, 
\begin{equation}
    \mathcal{T}_\hbar = t_{\text{TOA}}(q,p) + \beta \frac{(2N-1)!}{2^{N-1}} \frac{\hbar^{2N-1}}{\mu^{3(N-1)}} \qty(\frac{2\mu}{p^2})^N \sum_{s=0}^\infty (-1)^s \frac{\Gamma(N+s)}{s! \Gamma(N)} \qty(\mu^2 \omega^2 \frac{q^2}{p^2})^s \, .
\end{equation}
We see that the second term is proportional to the series expansion of $H^{-N}$ for the harmonic oscillator. Thus, we get the harmonic oscillator time of arrival shifted by a term proportional to $H^{-N}$, where $H$ is the harmonic oscillator Hamiltonian, for $q$ sufficiently close to the origin.

Therefore, the boundary conditions \eqref{eq:tkehnbc} correspond to time kernel solutions that are the supraquantization of the series expansion of the time of arrival at the origin shifted by a term proportional to $H^{-N}$. The constructed supraquantized operators are unique, are Hermitian, and are conjugate to the Hamiltonian. The form of the supraquantized $H^{-N}$ is
\begin{equation}
    \beta \frac{(2N-1)!}{2^{N-1}} \frac{\hbar^{2N-1}}{\mu^{3(N-1)}} \braket{q | \mathcal{T}_C | q'} = \frac{\mu}{i\hbar} T_C(q,q') \sgn(q-q') \, ,
\end{equation}
where $T_C(q,q')$ is the shifting term of the time of arrival solution.

\paragraph{General Potential} We shall now look at the general potential case,
\begin{equation} \label{eq:genpol}
    V(q) = \sum_{s=1}^\infty a_s q^s \, .
\end{equation}
With this potential, the time kernel equation gives the recurrence relation \cite{galapon2004}
\begin{equation} \label{eq:genpolrecur}
    \alpha_{m,n} = \frac{\mu}{2\hbar^2} \frac{1}{mn} \sum_{s=1}^\infty \frac{a_s}{2^{s-1}} \sum_{k=0}^{[s]} \binom{s}{2k+1} \alpha_{m-s+2k,n-2k-2} \, ,
\end{equation}
where $[s] = \frac{s}{2} - 1$ for even $s$, and $[s] = \frac{s-1}{2}$ for odd $s$. The first boundary condition in \eqref{eq:tkehnbca} just gives the time of arrival solution. The second boundary condition in \eqref{eq:tkehnbca} spits out the shifting term $T_C$. 

We then focus on the $\alpha_{m,n}$'s generated by the second condition of \eqref{eq:tkehnbca}. For the general potential, we have, for positive $m$ and $j$ (see Appendix \ref{app:genpolsoln}),
\begin{equation}
    \alpha_{m,2N-1+2j} = \sum_{s=0}^{j-1} \qty(\frac{\mu}{2\hbar^2})^{j-s} \alpha_{m,j}^{(s)} \, ,
\end{equation}
where
\begin{equation}
    \alpha_{m,j}^{(s)} = \frac{1}{m(2N-1+2j)} \sum_{r=0}^{s} \sum_{n=2r+1}^{m+2r} \frac{a_n}{2^{n-1}} \binom{n}{2r+1} \alpha^{(s-r)}_{m-n+2r,j-r-1} \, .
\end{equation}
The Wigner-Weyl transform \eqref{eq:wignerweyl} gives
\begin{equation}
    \mathcal{T}_\hbar \propto \sum_{s=0}^{j-1} (-1)^j \qty(\frac{\mu}{2})^{j-s} (2N-1+2j)! \alpha^{(s)}_{m,j} \frac{\hbar^{2N-1+2s}}{p^{2j}} \, .
\end{equation}
We are only interested in the leading term $s = 0$; we ignore the higher order terms $s > 0$ for now. This leading term is of $\mathcal{O}(\hbar^{2N-1})$, and the $s > 0$ terms are of order $\mathcal{O}(\hbar^{2N+1})$. We then deal with
\begin{equation}
    \alpha^{(0)}_{m,j} = -i^{2N-1} \frac{\beta}{\mu^{2N-2}} \frac{\Gamma\qty(N+\frac{1}{2})}{\Gamma\qty(N+\frac{1}{2}+j)} \frac{1}{2^m} C_{m,j} \, ,
\end{equation}
where
\begin{equation}
    C_{m,j} = \frac{1}{m} \sum_{s=1}^m s a_s C_{m-s,j-1} \, ,
\end{equation}
and $C_{m,0} = \delta_{m,0}$. Then, our time kernel solution is of the form
\begin{equation}
    T^{(0)}(u,v) = T_{\text{TOA}}^{(0)}(u,v) - i^{2N-1} \frac{\beta}{\mu^{2N-2}} \sum_{j=0}^\infty \sum_{m=0}^\infty \qty(\frac{\mu}{2\hbar^2})^j \frac{\Gamma\qty(N+\frac{1}{2})}{\Gamma\qty(N+\frac{1}{2}+j)} \frac{1}{2^m} C_{m,j} u^m v^{2N-1+2j} \, ,
\end{equation}
where $T_{\text{TOA}}^{(0)}(u,v)$ is the leading term of the time of arrival solution generated by the first condition of \eqref{eq:tkehnbca}. The second term meanwhile is the leading term of the solution generated by the second condition in \eqref{eq:tkehnbca}. The Wigner-Weyl transform then gives
\begin{equation}
\begin{split}
    \mathcal{T}^{(0)}_\hbar(q,p) &= t_{\text{TOA}}(q,p) + \beta \frac{(2N-1)!}{2^{N-1}} \frac{\hbar^{2N-1}}{\mu^{3(N-1)}} \\
    &\qquad \qquad \qquad \quad \times \qty(\frac{2\mu}{p^2})^N \sum_{k=0}^\infty (-1)^k \frac{\Gamma(N+k)}{k!\Gamma(N)} \qty(\frac{2\mu}{p^2})^k \qty(\sum_{s=1}^\infty a_s q^s)^k \, ,
\end{split}
\end{equation}
where the second term is nothing but a term proportional to $H^{-N}$ for the general potential.

Thus, even in the general potential case, we still get the time of arrival shifted by a term proportional to an inverse power of the Hamiltonian. We then have the following Proposition.

\begin{proposition}
For an entire analytic potential $V(q)$, and for constant $\beta \in \mathbb{R}$ and $N \in \mathbb{Z}^+$, the solution to the time kernel equation
\begin{equation}
    -\frac{\hbar^2}{2\mu} \, \pdv[2]{T(q,q')}{q} + \frac{\hbar^2}{2\mu} \, \pdv[2]{T(q,q')}{{q'}} + [V(q) - V(q')] T(q,q') = 0  \, ,
\end{equation}
with boundary conditions
\begin{equation}
    T(q,q) = \frac{q}{2} \qquad T(q,-q) = - i^{2N-1} \beta \mu^{-2(N-1)} (2q)^{2N-1} \, ,
\end{equation}
is given by
\begin{equation}
    T(q,q') = T_\text{TOA}(q,q') + T_C(q,q') \, ,
\end{equation}
where the first term $T_\text{TOA}(q,q')$ is the kernel of the time of arrival operator, and the second term $T_C(q,q')$ gives the Wigner-Weyl transform
\begin{equation} \label{eq:wwtkehnbc}
    \mathcal{T}_{\hbar,C}(q,p) = 
\begin{cases}
    \beta \frac{(2N-1)!}{2^{N-1}} \frac{\hbar^{2N-1}}{\mu^{3(N-1)}} \qty(\frac{1}{H(q,p)})^N \, , &\qquad \text{for linear systems,} \\
    \beta \frac{(2N-1)!}{2^{N-1}} \frac{\hbar^{2N-1}}{\mu^{3(N-1)}} \qty(\frac{1}{H(q,p)})^N + \order{\hbar^{2N+1}} \, , &\qquad \text{for nonlinear systems.}
\end{cases}
\end{equation}
\end{proposition}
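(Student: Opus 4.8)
The plan is to reduce the statement to two facts already established: the equivalence of the boundary data \eqref{eq:tkebchc} with the general conjugacy condition \eqref{eq:tkegbc}, and the existence--uniqueness Theorem~\ref{thm:tkeunique}. First I would note that the prescribed data $T(q,q)=q/2$, $T(q,-q)=-i^{2N-1}\beta\mu^{-2(N-1)}(2q)^{2N-1}$ is the special case of \eqref{eq:tkebchc} with $c=0$ and $g(v)=-i^{2N-1}\beta\mu^{-2(N-1)}v^{2N-1}$ (differentiable, with $g(0)=0$); Theorem~\ref{thm:tkeunique} then guarantees a unique continuous solution, so it suffices to exhibit one solution of the asserted shape. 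Since the time kernel equation \eqref{eq:tke} is linear and the boundary data is additive, I would split $T=T_{\text{TOA}}+T_C$, where $T_{\text{TOA}}$ solves \eqref{eq:tke} with $T_{\text{TOA}}(q,q)=q/2$, $T_{\text{TOA}}(q,-q)=0$ (the time of arrival kernel of \cite{galapon2004}) and $T_C$ solves \eqref{eq:tke} with $T_C(q,q)=0$, $T_C(q,-q)=-i^{2N-1}\beta\mu^{-2(N-1)}(2q)^{2N-1}$. In canonical coordinates this means $T_C$ is generated by $\alpha_{m,0}=0$ and $\alpha_{0,n}=-i^{2N-1}\beta\mu^{-2(N-1)}\delta_{n,2N-1}$, from which one reads off that $T_C$ satisfies the homogeneous conjugacy condition \eqref{eq:tkemgbc}, consistent with $\mathcal{T}_C$ commuting with the Hamiltonian.

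The substantive step is computing the Wigner--Weyl transform of $T_C$ for the entire potential \eqref{eq:genpol}. I would run the Frobenius recursion \eqref{eq:genpolrecur} forward from the seed $\alpha_{0,2N-1}$, organizing the coefficients by $\hbar$-weight exactly as in Appendix~\ref{app:genpolsoln}: write $\alpha_{m,2N-1+2j}=\sum_{s=0}^{j-1}(\mu/2\hbar^2)^{j-s}\alpha^{(s)}_{m,j}$ and isolate the leading piece $\alpha^{(0)}_{m,j}=-i^{2N-1}\beta\mu^{-2(N-1)}\,\frac{\Gamma(N+\tfrac12)}{\Gamma(N+\tfrac12+j)}\,2^{-m}C_{m,j}$, where $C_{m,j}=m^{-1}\sum_{s=1}^m s\,a_s C_{m-s,j-1}$ and $C_{m,0}=\delta_{m,0}$. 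Feeding the resulting series for $T_C$ into \eqref{eq:wignerweyl} and using \eqref{eq:fouriertnsgnt} to convert each $v^{2N-1+2j}\sgn(v)$ into a power of $\hbar/p$, the $s=0$ contribution assembles into
\begin{equation*}
\beta\,\frac{(2N-1)!}{2^{N-1}}\,\frac{\hbar^{2N-1}}{\mu^{3(N-1)}}\,\qty(\frac{2\mu}{p^2})^{N}\sum_{k=0}^\infty(-1)^k\,\frac{\Gamma(N+k)}{k!\,\Gamma(N)}\,\qty(\frac{2\mu}{p^2})^{k}V(q)^{k}.
\end{equation*}
Recognizing the generalized binomial series $\sum_{k\ge0}(-1)^k\binom{N+k-1}{k}x^k=(1+x)^{-N}$ with $x=2\mu V(q)/p^2$, and using $p^2/2\mu+V(q)=H(q,p)$, this equals $\beta\,(2N-1)!\,2^{-(N-1)}\hbar^{2N-1}\mu^{-3(N-1)}H(q,p)^{-N}$, which is the first line of \eqref{eq:wwtkehnbc}.

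It then remains to handle the higher-order terms. For a linear system (quadratic $V$) the recursion \eqref{eq:genpolrecur} truncates so that only the $s=0$ branch survives and the transform is exactly $H^{-N}$; the free-particle and harmonic-oscillator solutions displayed above serve as the explicit checks. For a nonlinear (non-quadratic) $V$ the $s\ge1$ coefficients are nonzero, but by construction each carries an extra factor $\hbar^{2s}$ relative to $\alpha^{(0)}$, so after \eqref{eq:wignerweyl} their combined contribution is $\order{\hbar^{2N+1}}$, yielding the second line of \eqref{eq:wwtkehnbc}. The main obstacle I anticipate is the bookkeeping in this last part: one must verify that the Wigner--Weyl transform preserves the $\hbar$-grading of the recursion (no lower-order-in-$\hbar$ pieces hide in the $s>0$ branch), justify interchanging the sums over $k$, over $s$, and over the monomials of the potential within the convergence domain inherited from the local time of arrival near the origin, and confirm the absolute and uniform convergence that legitimizes the resummation into $(1+x)^{-N}$; the resummation identity itself and the linear-system truncation are routine.
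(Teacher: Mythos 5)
Your proposal follows essentially the same route as the paper: reduce to Theorem~\ref{thm:tkeunique} and the boundary-condition equivalence, split $T=T_{\text{TOA}}+T_C$ by linearity, organize the Frobenius coefficients by $\hbar$-weight as $\alpha_{m,2N-1+2j}=\sum_s(\mu/2\hbar^2)^{j-s}\alpha^{(s)}_{m,j}$, extract the $s=0$ branch via the $C_{m,j}$ recurrence, and resum the Wigner--Weyl transform into the binomial series for $H^{-N}$, with the $s\ge1$ branch supplying the $\order{\hbar^{2N+1}}$ corrections for nonlinear systems. The only step you defer as ``routine'' --- the identity $k!\sum_m C_{m,k}q^m=\qty(\sum_s a_sq^s)^k$ --- is exactly the one the paper proves explicitly in Appendix~\ref{app:genpolsoln} by differentiating and matching against the recurrence \eqref{eq:cmjrecur}, so the two arguments coincide in substance.
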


In the classical limit $\hbar \to 0$, all that remains in the Wigner-Weyl of $T(q,q')$ is the local time of arrival (at the origin) term $t_0(q,p)$. If $\beta \propto 1/\hbar^{2N-1}$, then the higher order terms in \eqref{eq:wwtkehnbc} reduce from $\order{\beta\hbar^{2N+1}}$ to $\order{\hbar^2}$, and the classical limit is then the local time of arrival shifted by a term proportional to a negative power of the Hamiltonian. We then interpret the time kernel solution 
\begin{equation} \label{eq:tkehnbcgen}
    \braket{q|\mathcal{T}|q'} = \braket{q|\mathcal{T}_\text{TOA}|q'} + \beta \frac{(2N-1)!}{2^{N-1}} \frac{\hbar^{2N-1}}{\mu^{3(N-1)}} \braket{q | \mathcal{T}_C | q'} \, ,
\end{equation}
as the supraquantization of 
\begin{equation}
    \mathcal{T}_0(q,p) = t_0(q,p) + \beta \frac{(2N-1)!}{2^{N-1}} \frac{\hbar^{2N-1}}{\mu^{3(N-1)}} \qty(\frac{1}{H(q,p)})^N \, .
\end{equation}
Again, we note that $\braket{q | \mathcal{T}_C | q'}$ and $\braket{q | \operator{H}^{-N} | q'}$ are not equal in general, since only the leading term of $\braket{q | \mathcal{T}_C | q'}$ corresponds to $H^{-N}$.

\subsection{An example satisfying time-reversal symmetry}
The inverse powers of the Hamiltonian constitute solutions that can be Hermitian, but not satisfy time reversal symmetry. The second boundary condition for the Hamiltonian solution was in the form $\beta_{2N-1} q^{2N-1}$ for positive integer $N$, i.e., they only have nonvanishing $\beta_k$'s for odd $k$. To satisfy time-reversal symmetry, $\beta_k$ should vanish for odd $k$ and be purely real for even $k$.

Suppose we have the boundary condition
\begin{equation}
    T(q,q) = \frac{q}{2} \qquad\text{and}\qquad T(q,-q) = \lambda q^2 \, ,
\end{equation}
where $\lambda$ is some constant with units of inverse length. This gives, for the free particle case $V(q) = 0$, the time kernel solution
\begin{equation}
    T(q,q') = \frac{1}{4}(q+q') + \lambda \qty(\frac{q-q'}{2})^2 \, ,
\end{equation}
which gives
\begin{equation}
    \braket{q|\mathcal{T}|q'} = \frac{\mu}{4i\hbar} (q+q') \sgn(q-q') + \lambda \frac{\mu}{4i\hbar} (q-q')^2 \sgn(q-q') \, .
\end{equation}
This kernel is Hermitian and satisfies time reversal symmetry. The Wigner-Weyl transform gives
\begin{equation}
    \mathcal{T}_\hbar(q,p) = -\frac{\mu q}{p} + \lambda \frac{\mu \hbar^2}{p^3} \, .
\end{equation}
The second term then has units of time, and is also interpreted to shift the time of arrival result. This shifting term vanishes as $\hbar \to 0$. Note that this shifting term is not anymore a function of the Hamiltonian but still corresponds to an operator which commutes with the Hamiltonian. This demonstrates that the term added to a Hamiltonian conjugate need not be a function of a Hamiltonian to still satisfy the canonical commutation relation.

\section{The Modified Time Kernel Equation} \label{sec:mtke}
The transfer principle can be exhibited in a less explicit way in deriving the time of arrival operator \cite{domingo2004}. This gives a more general way of establishing quantum-classical correspondence via supraquantization. Instead of assuming a definite form of the kernel, the whole kernel $\braket{q | \mathcal{T} | q'}$ can be regarded as unknown. We then interpret our time operator $\mathcal{T}$ as a distribution on some test function space, i.e., $\braket{\mathcal{T},\varphi} = \int \mathcal{T}(q,q') \varphi(q,q') \dd{q} \dd{q'}$.

Since $\rhsextension{H}$ is a mapping from $\dual{\Phi}$ to $\dual{\Phi}$, then $\rhsextension{H}\mathcal{T}$ is well defined for $\mathcal{T}\varphi \in \dual{\Phi}$. For the $\mathcal{T}\rhsextension{H}$ term, we choose $\Phi$ to be invariant under $\rhsextension{H}$ (which is satisfied if $\Phi$ is invariant under a self-adjoint $\operator{H}$); here we continue choosing $\Phi$ to be the space of infinitely differentiable functions with compact support. We then have, for $\varphi$ and $\tilde{\varphi}$ in $\Phi$,
\begin{equation}
\begin{split}
\braket{ \tilde{\varphi} | [ \rhsextension{H}, \mathcal{T} ] \varphi } &= \iint \tilde{\varphi}^*(q) \qty[-\frac{\hbar^2}{2\mu} \, \pdv[2]{ \mathcal{T}(q,q')}{q} + V(q)\mathcal{T}(q,q')] \varphi(q')\dd{q'}\dd{q} \\
& \quad - \iint \tilde{\varphi}^*(q) \qty[-\frac{\hbar^2}{2\mu} \, \pdv[2]{ \mathcal{T}(q,q')}{{q'}} + V(q')\mathcal{T}(q,q')] \varphi(q')\dd{q'}\dd{q} \, ,
\end{split}
\end{equation}
where $\mathcal{T}(q,q')$ is essentially the kernel $\braket{q|\mathcal{T}|q'}$ in the original time kernel equation; and where the $\mathcal{T}\rhsextension{H}$ is obtained using the definition of the derivative of a distribution $\mathcal{T} \varphi^{(n)} = (-1)^n \mathcal{T}^{(n)} \varphi$, thus
\begin{equation}
    \mathcal{T}\rhsextension{H}\varphi(q') = \mathcal{T}\qty(-\frac{\hbar^2}{2\mu} \dv[2]{{q'}}\varphi(q') + V(q')\varphi(q')) = -\frac{\hbar^2}{2\mu} \dv[2]{\mathcal{T}}{{q'}}\varphi(q') + V(q')\mathcal{T}\varphi(q') \, .
\end{equation}
This holds for any $\Phi$ where the derivative of a distribution is related to the distribution of the derivative in this way. Note that the integrals here are to be understood in the distributional sense. They reduce to the usual integration for regular distributions, but is only symbolic for singular distributions (e.g., Dirac delta distribution).

If we then impose the canonical commutation relation $[ \rhsextension{H}, \mathcal{T} ] = i\hbar$, we must require that
\begin{equation} \label{eq:mtke}
	-\frac{\hbar^2}{2\mu} \, \pdv[2]{ \mathcal{T}(q,q')}{q} + \frac{\hbar^2}{2\mu} \, \pdv[2]{ \mathcal{T}(q,q')}{{q'}} + [V(q) - V(q')]\mathcal{T}(q,q') = i\hbar\delta(q-q') \, .
\end{equation}
We call \eqref{eq:mtke} the \textbf{modified time kernel equation}. Note that \textit{any} solution to the modified time kernel equation \eqref{eq:mtke} automatically corresponds to an operator that is canonically conjugate with the Hamiltonian. For the boundary condition $\mathcal{T}(\Gamma) = 0$ for any boundary $\Gamma$, it was shown to have the same result as the original time kernel equation for linear systems \cite{domingo2004}. Comparing with the original time kernel equation \eqref{eq:tke} and boundary condition \eqref{eq:tkegbc}, the modified time kernel equation is written as one equation instead of two, and does not have the assumption that $\mathcal{T}(q,q') = (\mu/i\hbar)T(q,q')\sgn(q-q')$ for analytic $T(q,q')$.

We first express \eqref{eq:mtke} in its canonical form by changing variables to $u = q + q'$ and $v = q - q'$,
\begin{equation} \label{eq:mtkecanon}
    -\frac{2\hbar^2}{\mu} \, \pdv[2]{\mathcal{T}(u,v)}{u}{v} + \qty[V\qty(\frac{u+v}{2}) - V\qty(\frac{u-v}{2})] \mathcal{T}(u,v) = i\hbar \delta(v)  \, .
\end{equation}
In \cite{domingo2004}, given the time of arrival boundary conditions, \eqref{eq:mtkecanon} was solved using Green's function and method of successive approximation. Here, to get the general solution, we first note that
\begin{equation}
    \int \delta(y) \dd{y} = \alpha H(v) - \beta H(-v) + c \, ,
\end{equation}
where $H(x)$ is the Heaviside function, $\alpha$ and $\beta$ are constants, $c$ is the constant distribution, and $\alpha + \beta = 1$. Note that the antiderivative of $\delta$ still resides inside $\dual{\Phi}$.

From \eqref{eq:mtkecanon}, we then get the integral form of the modified time kernel equation,
\begin{equation}
\begin{split} \label{eq:mtkeint}
    \mathcal{T}(u,v) &= \frac{\mu}{2 i \hbar} u [\alpha H(v) - \beta H(-v)] + f(u) + g(v) \\
    &\quad + \frac{\mu}{2\hbar^2} \int_0^v \int_0^u \qty[V\qty(\frac{x+y}{2}) - V\qty(\frac{x-y}{2})] \mathcal{T}(x,y) \dd{x} \dd{y} \, ,
\end{split}
\end{equation}
where $f$ and $g$ are distributions in $u$ and $v$, respectively. The distribution $\mathcal{T}$ acts on test functions of two independent variables $\phi(u,v) \in \Phi$. Note that here, $\frac{\partial^{k_1+k_2}}{\partial u^{k_1} \partial v^{k_2}} \phi(u,v)$ exists and is continuous everywhere for all positive integers $k_1$ and $k_2$. Also, the support of $\phi$ is the closure of the set of all points in $(u,v)$ wherein $\phi(u,v)$ is nonzero.

The definite integrals in \eqref{eq:mtkeint} will depend on the nature of the potential $V(q)$ and the time kernel distribution $\mathcal{T}$. For now, we will only deal with the case of continuous potentials and with regular distributions $\mathcal{T}$, i.e., the $f(u)$ and $g(v)$ are locally integrable functions. One could interpret the integrals as
\begin{equation}
\begin{split}
    &\Braket{\qty[V\qty(\frac{x+y}{2}) - V\qty(\frac{x-y}{2})] \mathcal{T}(x,y) \theta(x,y), \lambda(x,y)} \\
    &\qquad\qquad\qquad\qquad\qquad\qquad = \int_0^v \int_0^u \qty[V\qty(\frac{x+y}{2}) - V\qty(\frac{x-y}{2})] \mathcal{T}(x,y) \dd{x} \dd{y} \, ,
\end{split}
\end{equation}
where $\theta(x,y)$ is identically equal to one over the region $0 \leq x \leq u$ and $0 \leq y \leq v$ (c.f. over the region $\mathcal{R}$ in Figure \ref{fig:region}) and is zero outside, and where $\lambda(x,y) \in \Phi$ is identically equal to one over neighborhood of the region $0 \leq x \leq u$ and $0 \leq y \leq v$, and is zero outside some bigger region.

Finally, we note that two distributions $\mathcal{T}_1$ and $\mathcal{T}_2$ are equal if
\begin{equation}
    \braket{\mathcal{T}_1,\phi} = \braket{\mathcal{T}_2,\phi} \, ,
\end{equation}
for every $\phi \in \Phi$; in other words, $\braket{\mathcal{T}_1 - \mathcal{T}_2,\phi} = 0$. If these regular distributions correspond to continuous functions, then $\mathcal{T}_1$ and $\mathcal{T}_2$ must be identical (at least, in the neighborhood of the support of $\phi$). In general, the regular distributions, $\mathcal{T}_1$ and $\mathcal{T}_2$, correspond to locally integrable functions, which include functions with points where it is discontinuous (e.g. the Heaviside function). This means that two equal regular distributions can differ at most on a set of measure zero \cite{zemanian1987}.

Below, we will show that there exists a regular distribution $\mathcal{T}$ that is a solution to the modified time kernel equation. The proof is analogous to Theorem \ref{thm:tkeunique}, where here, we recast it in a distributional sense.

\begin{theorem} \label{thm:mtkeexist}
Consider the integral form of the modified time kernel equation given by \eqref{eq:mtkeint}. If $V(q)$ is a continuous function at any point of the real line, and if $f$ and $g$ are locally integrable functions, then there exists a regular distribution $\mathcal{T}$ that is a solution to the modified time kernel equation \eqref{eq:mtke}.
\end{theorem}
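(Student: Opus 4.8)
The plan is to run the successive-approximation argument of Theorem~\ref{thm:tkeunique}, recast so as to construct a regular distribution solving the integral equation \eqref{eq:mtkeint}. Abbreviate $A = \mu/(2\hbar^2)$ and, for fixed $(u,v)$, let $\mathcal{R}(u,v)$ be the rectangle of Figure~\ref{fig:region} with opposite corners $(0,0)$ and $(u,v)$. Set
\[
\mathcal{T}_0(u,v) = \frac{\mu}{2i\hbar}\,u\,[\alpha H(v) - \beta H(-v)] + f(u) + g(v),
\]
and define recursively
\[
\mathcal{T}_n(u,v) = \mathcal{T}_0(u,v) + A\int_0^v\!\!\int_0^u \qty[V(\tfrac{x+y}{2}) - V(\tfrac{x-y}{2})]\,\mathcal{T}_{n-1}(x,y)\,\dd{x}\dd{y},
\]
the iterated integral being read in the distributional sense explained above (with the cutoff $\theta$ over $\mathcal{R}(u,v)$ and a test function $\lambda\in\Phi$ equal to one on a neighbourhood of it). The first thing to record is that $\mathcal{T}_0 \in L^1_{\mathrm{loc}}(\reals^2)$: the term $u[\alpha H(v) - \beta H(-v)]$ is locally bounded, and $f,g$ are locally integrable by hypothesis; hence $\mathcal{T}_0$ is a regular distribution, the recursion is well defined, and each $\mathcal{T}_n - \mathcal{T}_0$ is, after one integration, a continuous function.

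Next I would carry out the telescoping estimate. Fix a compact rectangle $\mathcal{R}_0 = [-U,U]\times[-V,V]$. Since $V$ is continuous there is $M>0$ with $\abs{V(\tfrac{x+y}{2}) - V(\tfrac{x-y}{2})} \le M$ on $\mathcal{R}_0$. The one place where the argument genuinely differs from Theorem~\ref{thm:tkeunique} is the start of the induction: $\mathcal{T}_0$ is \emph{not} bounded (it carries a Heaviside jump and the arbitrary $L^1_{\mathrm{loc}}$ data $f,g$), so there is no uniform zeroth bound to iterate. The remedy is that a single application of the integral operator smooths: $\mathcal{T}_1 - \mathcal{T}_0$ is the primitive of an $L^1_{\mathrm{loc}}$ function, hence continuous in $(u,v)$, and on $\mathcal{R}_0$ it is bounded by $AM\,\Psi_0$ with $\Psi_0(u,v) = \iint_{\mathcal{R}(u,v)}\abs{\mathcal{T}_0}$ and $C_0 := \sup_{\mathcal{R}_0}\Psi_0 < \infty$. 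From $\mathcal{T}_1-\mathcal{T}_0$ onward everything is continuous, and substituting the previous bound into the integral gives, by induction on $\mathcal{R}_0$, an estimate of the same shape as in Theorem~\ref{thm:tkeunique}, namely
\[
\abs{\mathcal{T}_n(u,v) - \mathcal{T}_{n-1}(u,v)} \le (AM)^n\,C_0\,\frac{(\abs{u}\abs{v})^{\,n-1}}{((n-1)!)^2}, \qquad (u,v)\in\mathcal{R}_0,
\]
which is summable in $n$. Hence $\mathcal{T}_0 + \sum_{j\ge 1}(\mathcal{T}_j - \mathcal{T}_{j-1})$ converges absolutely and uniformly on every compact set; its limit $\mathcal{T}$ differs from $\mathcal{T}_0$ by a continuous function, so $\mathcal{T}\in L^1_{\mathrm{loc}}(\reals^2)$ is again a regular distribution.

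Finally I would verify that this $\mathcal{T}$ solves \eqref{eq:mtke}. Uniform convergence on the compact $\mathcal{R}(u,v)$ lets me pass to the limit under the integral sign in the recursion, so $\mathcal{T}$ satisfies the integral equation \eqref{eq:mtkeint}; and since $L^1_{\mathrm{loc}}$/uniform-on-compacts convergence of functions implies convergence of the associated regular distributions, $\mathcal{T}_n \to \mathcal{T}$ in $\dual{\Phi}$, so the manipulations are legitimate distributionally. Differentiating \eqref{eq:mtkeint} in the distributional sense, $\partial_u\partial_v$ of the double integral returns its integrand $A[V(\tfrac{u+v}{2}) - V(\tfrac{u-v}{2})]\mathcal{T}(u,v)$, one has $\partial_u\partial_v f(u) = \partial_u\partial_v g(v) = 0$, and
\[
\partial_u\partial_v\big(u[\alpha H(v) - \beta H(-v)]\big) = (\alpha+\beta)\,\delta(v) = \delta(v)
\]
because $\alpha+\beta=1$; multiplying through by $-2\hbar^2/\mu$ and using $-\tfrac{2\hbar^2}{\mu}\cdot\tfrac{\mu}{2i\hbar} = i\hbar$ gives \eqref{eq:mtkecanon}, hence \eqref{eq:mtke} upon returning to $(q,q')$. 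The \emph{main obstacle}, as flagged, is the first induction step: absorbing the unboundedness of $\mathcal{T}_0$ via the regularising effect of one integration, after which the estimates of Theorem~\ref{thm:tkeunique} apply essentially verbatim; a secondary technical point is justifying the term-by-term distributional differentiation of the uniformly convergent series, which reduces to the identity $\partial_v(\alpha H(v)-\beta H(-v)) = \delta(v)$ and to the continuity of distributional differentiation.
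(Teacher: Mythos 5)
Your proposal is correct and follows essentially the same route as the paper's proof: the method of successive approximations applied to the integral form \eqref{eq:mtkeint}, a telescoping sum with factorial-type bounds yielding absolute and uniform convergence on compacts, and the observation that the limit remains locally integrable and hence a regular distribution. Your only departures are cosmetic---you absorb the unbounded data into a single constant $C_0=\sup\iint_{\mathcal{R}(u,v)}\abs{\mathcal{T}_0}$ where the paper tracks $N_f$, $N_g$ and the Heaviside term separately, and you add an explicit final verification (via $\partial_v(\alpha H(v)-\beta H(-v))=(\alpha+\beta)\delta(v)$) that the limit solves \eqref{eq:mtke}, which the paper asserts without detail.
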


\begin{proof}
From the integral form of the modified time kernel equation \eqref{eq:mtkeint}, we can use the method of successive approximation. Let $\qty{\mathcal{T}_n}_{n=0}^\infty$ be a sequence of locally integrable functions where
\begin{equation}
    \mathcal{T}_0(u,v) = \frac{\mu}{2 i \hbar} u [\alpha H(v) - \beta H(-v)] + f(u) + g(v) \, ,
\end{equation}
and
\begin{equation}
\begin{split} 
    \mathcal{T}_n(u,v) &= \frac{\mu}{2 i \hbar} u [\alpha H(v) - \beta H(-v)] + f(u) + g(v) \\
    &\quad + \frac{\mu}{2\hbar^2} \int_0^v \int_0^u \qty[V\qty(\frac{x+y}{2}) - V\qty(\frac{x-y}{2})] \mathcal{T}_{n-1}(x,y) \dd{x} \dd{y} \, .
\end{split}
\end{equation}
For locally integrable functions $f(u)$ and $g(v)$, the first approximation $\mathcal{T}_0(u,v)$ is also a locally integrable function. It follows that $\mathcal{T}_n(u,v)$ is also locally integrable. They then correspond to regular distributions in $\dual{\Phi}$. We would like to show that $\qty{\mathcal{T}_n(u,v)}_{n=0}^\infty$ converges. 

It is convenient to also write $\mathcal{T}_n$ as
\begin{equation} \label{eq:mtketnsum}
    \mathcal{T}_n(u,v) = \frac{\mu}{2 i \hbar} u [\alpha H(v) - \beta H(-v)] + f(u) + g(v) + \sum_{j=1}^n \qty[\mathcal{T}_j(u,v) - \mathcal{T}_{j-1}(u,v)] \, .
\end{equation}
Since $V(q)$ is continuous, there exists an $M > 0$ such that $\abs{V\qty(\frac{x+y}{2}) - V\qty(\frac{x-y}{2})} \leq M$ for any point $(x,y)$ in the region $\mathcal{R}$ in Figure \ref{fig:region}. Also, since $f$ and $g$ are locally integrable, then there exists an $N_f > 0$ and $N_g > 0$ such that $\int_{\Omega_f} \abs{f(x)} \dd{x} \leq N_f$ and $\int_{\Omega_g} \abs{g(y)} \dd{y} \leq N_g$ for all compact subsets $\Omega_f \subset \domain{f}$ and $\Omega_g \subset \domain{g}$, where $\domain{f}$ and $\domain{g}$ are the domains of $f$ and $g$ respectively. We then have,
\begin{align}
    \abs{\mathcal{T}_1(u,v) - \mathcal{T}_0(u,v)} &\leq \frac{\mu M}{2 \hbar^2} \int_0^v \int_0^u \abs\bigg{\frac{\mu}{2 i \hbar} x \qty\Big[\alpha H(y) - \beta H(-y)] + f(x) + g(y)} \dd{x} \dd{y} \nonumber \\
    &\leq \frac{\mu M}{2 \hbar^2} \int_0^v \int_0^u \qty\bigg[\frac{\mu}{2 \hbar} \abs{x} \qty\Big( \abs{\alpha} \abs{H(y)} + \abs{\beta} \abs{H(-y)} ) + \abs{f(x)} + \abs{g(y)}] \dd{x} \dd{y} \nonumber \\
    &\leq \frac{\mu M}{2 \hbar^2} \qty[ \frac{\mu}{2\hbar} \frac{\abs{u}^2 \abs{v}}{2} \qty\Big(\abs{\alpha} + \abs{\beta}) + \abs{u}\abs{v} \qty\Big(N_f + N_g) ] \nonumber \\
    &\quad = \frac{\mu M}{2 \hbar^2} \abs{u}\abs{v} \qty[\frac{\mu}{2\hbar} \frac{\abs{u}}{2} \qty\Big(\abs{\alpha} + \abs{\beta}) + \qty\Big(N_f + N_g) ] \, .
\end{align}
Similarly,
\begin{align}
    \abs{\mathcal{T}_2(u,v) - \mathcal{T}_1(u,v)} &\leq \frac{\mu M}{2 \hbar^2} \int_0^v \int_0^u \abs{\mathcal{T}_1(x,y) - \mathcal{T}_0(x,y)} \dd{x} \dd{y} \nonumber \\
    &\leq \qty(\frac{\mu M}{2 \hbar^2})^2 \frac{\abs{u}^2\abs{v}^2}{2 \cdot 2} \qty[\frac{\mu}{2\hbar} \frac{\abs{u}}{3} \qty\Big(\abs{\alpha} + \abs{\beta}) + \qty\Big(N_f + N_g) ] \, ,
\end{align}
and,
\begin{equation}
    \abs{\mathcal{T}_3(u,v) - \mathcal{T}_2(u,v)} \leq \qty(\frac{\mu M}{2 \hbar^2})^3 \frac{\abs{u}^3\abs{v}^3}{(3 \cdot 2)(3 \cdot 2)} \qty[\frac{\mu}{2\hbar} \frac{\abs{u}}{4} \qty\Big(\abs{\alpha} + \abs{\beta}) + \qty\Big(N_f + N_g) ] \, .
\end{equation}
By induction,
\begin{equation}
    \abs{\mathcal{T}_j(u,v) - \mathcal{T}_{j-1}(u,v)} \leq \qty(\frac{\mu M}{2 \hbar^2})^j \frac{\abs{u}^j\abs{v}^j}{(j!)^2} \qty[\frac{\mu}{2\hbar} \frac{\abs{u}}{(j+1)} \qty\Big(\abs{\alpha} + \abs{\beta}) + \qty\Big(N_f + N_g) ] \, .
\end{equation}
We see that this goes to zero as $j$ approaches infinity. Thus, the partial sum in \eqref{eq:mtketnsum} is absolutely and uniformly convergent for all finite values of $u$ and $v$. Therefore, the sequence $\qty{\mathcal{T}_n(u,v)}_{n=0}^\infty$ converges. 

Let the limit of $\mathcal{T}_n$ as $n \to \infty$ be denoted by $\mathcal{T}$. Then, $\mathcal{T}$ is a solution to the modified time kernel equation. Since $\dual{\Phi}$ is closed under convergence \cite{zemanian1987}, then $\mathcal{T}_n \in \dual{\Phi}$ implies that $\mathcal{T}$ is also in $\dual{\Phi}$, i.e., $\mathcal{T}$ is a regular distribution for locally integrable functions $f$ and $g$.
\end{proof}

When $\mathcal{T}$ is a regular distribution, then there is a one-to-one relation between $\mathcal{T}$ and a locally integrable function $\mathcal{T}(q,q') = \braket{q|\mathcal{T}|q'}$. The uniqueness proof of Theorem \ref{thm:tkeunique} could also be attempted here. One has to note that all the locally integrable functions which differ at most on a set of measure zero produce the same regular distribution. Each solution can then be thought of as an equivalence class of functions corresponding to a particular regular distribution. Additionally, Theorem \ref{thm:mtkeexist} only shows the existence of a distributional solution for locally integrable function $f$ and $g$. This does not encompass all the possible solutions, since one could also choose $f$ and $g$ to be Dirac deltas, making $\mathcal{T}$ a singular distribution.

\paragraph{Hermiticity} Our operator $\mathcal{T}$ is Hermitian if $\mathcal{T}(q,q') = \mathcal{T}^*(q',q)$; in canonical form, 
\begin{equation}
    \mathcal{T}(u,v) = \mathcal{T}^*(u,-v) \, .
\end{equation}
From the general solution \eqref{eq:mtkeint}, we see that
\begin{equation}
\begin{split}
    \mathcal{T}^*(u,-v) &= \frac{\mu}{2 i \hbar} u [\beta^* H(v) - \alpha^* H(-v)] + f^*(u) + g^*(-v) \\
    &\quad + \frac{\mu}{2\hbar^2} \int_0^v \int_0^u \qty[V\qty(\frac{x+y}{2}) - V\qty(\frac{x-y}{2})]^* \mathcal{T}^*(x,-y) \dd{x} \dd{y} \, .
\end{split}
\end{equation}
Comparing the first terms of $\mathcal{T}(u,v)$ and $\mathcal{T}^*(u,-v)$ gives
\begin{equation}
    (\alpha - \beta^*)H(v) = (\beta - \alpha^*)H(-v) \, .
\end{equation}
Since $H(v)$ and $H(-v)$ are independent, then we get
\begin{equation}
    \alpha = \beta^* \, .
\end{equation}
With $\alpha + \beta = 1$, then the above condition also implies that $\alpha + \alpha^* = 1$, and similarly, $\beta + \beta^* = 1$.

Thus, the Hermiticity condition $\mathcal{T}(u,v) = \mathcal{T}^*(u,-v)$ is satisfied if
\begin{equation}
    \Re(\alpha) = \Re(\beta) = \frac{1}{2} \, ,
\end{equation}
\begin{equation}
    \alpha = \beta^* \, ,
\end{equation}
\begin{equation}
    f(u) = f^*(u) \, ,
\end{equation}
\begin{equation}
    g(v) = g^*(-v) \, ,
\end{equation}
\begin{equation}
    V(q) = V^*(q) \, .
\end{equation}
If the real parts of $\alpha$ and $\beta$ are $1/2$, then $\mathcal{T}(u,v)$ will contain a term with $\sgn(v)$, which will generate the usual time of arrival solution.

\paragraph{Time-reversal symmetry} Our operator $\mathcal{T}$ satisfies time-reversal symmetry if $\mathcal{T}^*(q,q') = -\mathcal{T}(q,q')$; in canonical form,
\begin{equation}
    \mathcal{T}^*(u,v) = -\mathcal{T}(u,v) \, .
\end{equation}
Then, comparing the first terms give
\begin{equation}
    (\alpha - \alpha^*) H(v) = (\beta - \beta^*) H(-v) \, .
\end{equation}
Again, since $H(v)$ and $H(-v)$ are independent, then $\alpha = \alpha^*$ and $\beta = \beta^*$.

Thus, time-reversal symmetry holds when
\begin{equation}
    \alpha = \alpha^* \, , \qquad \beta = \beta^* \, ,
\end{equation}
\begin{equation}
    \alpha + \beta = 1 \, ,
\end{equation}
\begin{equation}
    f(u) = -f^*(u) \, ,
\end{equation}
\begin{equation}
    g(v) = -g^*(v) \, ,
\end{equation}
\begin{equation}
    V(q) = V^*(q) \, .
\end{equation}
In other words, $\alpha$, $\beta$, and $V(q)$ are purely real while $f(u)$ and $g(v)$ are purely imaginary.

\paragraph{Both Hermitian and Time-reversal symmetric} If we want our solution to satisfy both Hermiticity and time-reversal symmetry, then the conditions for both must be simultaneously imposed. For the constants $\alpha$ and $\beta$, since time-reversal symmetry requires purely real $\alpha$ and $\beta$, and Hermiticity requires that the real part of both be $1/2$; thus $\alpha = \beta = 1/2$. For $f$, we have the condition $f(u) = f^*(u) = -f^*(u)$, which gives $f^*(u) = 0$, and thus, $f(u) = 0$. For $g$, we have $g(v) = g^*(-v) = -g^*(v)$, which means that $g$ is purely imaginary and odd. Thus, both Hermiticity and time-reversal symmetry are satisfied if
\begin{equation}
    \alpha = \beta = \frac{1}{2} \, ,
\end{equation}
\begin{equation}
    f(u) = 0 \, ,
\end{equation}
\begin{equation}
    g(v) = g^*(-v) = -g^*(v) \, ,
\end{equation}
\begin{equation}
    V(q) = V^*(q) \, .
\end{equation}

The $\alpha = \beta = 1/2$ condition makes the first term in \eqref{eq:mtkeint} contain a $\sgn(v)$ which gives the time of arrival solution. We let $f$ vanish, $g(v)$ be odd and purely imaginary, and the potential $V(q)$ to be purely real.

\subsection{Free Particle}
For the free particle case $V(q) = 0$, the general solution is simply
\begin{equation} \label{eq:mtkefree}
    \mathcal{T}(u,v) = \frac{\mu}{2 i \hbar} u [\alpha H(v) - \beta H(-v)] + f(u) + g(v) \, .
\end{equation}
Using the Wigner-Weyl transform \eqref{eq:wignerweyl} and using the relation for the Fourier transform of $H(t)$, $\fourier H(t)$, where $\fourier f = \int_{-\infty}^\infty f(t) e^{-i\omega t} \dd{t}$, we have \cite{zemanian1987},
\begin{equation}
    \int_{-\infty}^\infty H(t) e^{-i t\omega} \dd{t} = \pi \delta(\omega) + \frac{1}{i\omega} \, .
\end{equation}
Using $\fourier 1 = 2\pi \delta(\omega)$ as well, we then get
\begin{equation}
\begin{split}
    \mathcal{T}_\hbar(q,p) &= -\frac{\mu q}{p} - i \pi \mu q (\alpha - \beta) \delta(p) \\
    & \quad + 2\pi \hbar f(2q) \delta(p) + \int_{-\infty}^\infty g(v) \exp\qty(-i\frac{p}{\hbar}v) \dd{v} \, .
\end{split}
\end{equation}
The first term is nothing but the free time of arrival at the origin. Even without requiring Hermiticity or time-reversal symmetry, being conjugate to the Hamiltonian alone brought upon this time of arrival term. Conditions for $\alpha$, $\beta$, $f$, and $g$ then determine the shift from the usual time of arrival result---which, in essence, determine all the other Hamiltonian conjugate solutions.

One could observe that the modified time kernel equation gives a more general solution compared to the original time kernel. For example, If $\alpha \neq \beta$ and $f(u) \neq 0$, then the Wigner-Weyl transform will have terms containing Dirac deltas. Since the time of arrival term $-\mu q/p$ is only valid for $p \neq 0$, we interpret the delta term as a contribution of a stationary particle, i.e., when $p=0$, the particle will never reach the arrival point unless it is already there; we must necessarily require that $f(0) = 0$ in this interpretation. This ``stationary particle" term will disappear in the classical limit $\hbar \to 0$ unless $f$ is sufficiently large, e.g., $f \propto \hbar^{-1}$.

These Dirac delta terms cannot appear in the original time kernel equation solution, since the form of the kernel is always assumed to have $\sgn(v)$ multiplied by an analytic function in $v$. In using the transform \eqref{eq:wignerweyl} and the Fourier transform of $v\sgn(v)$, we see that a Dirac delta can never arise in the classical limit of the original kernel. Therefore, the original free time kernel equation solutions only correspond to moving particles. Meanwhile, the free modified time kernel equation allows stationary particles.

Let us now look at the consequences of imposing Hermiticity and time-reversal symmetry on the Wigner-Weyl transform. 
\begin{itemize}
    \item Hermiticity only: The second term containing $i(\alpha - \beta) = -2\Im(\alpha)$ will become purely real, and the third term containing $f$ will be purely real. The condition on $g$ will ensure that the Fourier transform is real ($g(v) = g^*(-v)$ implies that $\fourier g = (\fourier g)^*$). The Wigner-Weyl transform will be purely real and will continue to contain Dirac delta terms.
    
    \item Time-reversal symmetry only: The second and third term is purely imaginary. The Fourier transform of $g$ is not necessarily real anymore. Dirac deltas are still present.
    
    \item Both Hermiticity and time-reversal symmetry: Both the second and third term vanish, and the last term is purely real. Here, all the Dirac delta terms have vanished.
\end{itemize}

Note that by letting $\alpha = \beta = 1/2$ and $f(u) = g(v) = 0$, we get the free classical time of arrival (at the origin) $\mathcal{T}_\hbar(q,p) = -\mu q p^{-1}$, and the corresponding free time kernel $\mathcal{T}(u,v) = \mu (4 i \hbar)^{-1} u \sgn(v)$. 

The term containing $g$ is essentially a distributional Fourier transform, and its form will depend on $g(v)$. One interesting example is by setting $g(v) = - \mu \hbar^{-1} v \sgn(v)$. The Wigner-Weyl transform of this term is $\hbar/H(q,p)$, where $H(q,p)$ is the free classical Hamiltonian. This gives the inverse Hamiltonian shifted time of arrival, which we also constructed with the original time kernel equation.

\subsection{Harmonic Oscillator}
In general, for $V(q) \neq 0$, the solution can be constructed using an iterative solution to the Fredholm integral of the second kind. For the harmonic oscillator $V(q) = \mu\omega^2 q^2/2$, we have
\begin{equation}
\begin{split}
    \mathcal{T}(u,v) &= \frac{\mu}{2 i \hbar} u [\alpha H(v) - \beta H(-v)] + f(u) + g(v) \\
    &\quad + \frac{\mu^2 \omega^2}{4\hbar^2} \int_0^v \int_0^u x y \mathcal{T}(x,y) \dd{x} \dd{y} \, .
\end{split}
\end{equation}
Let the initial approximation $\mathcal{T}_0$ be 
\begin{equation}
    \mathcal{T}_0(u,v) = \frac{\mu}{2 i \hbar} u [\alpha H(v) - \beta H(-v)] + f(u) + g(v) \, ,
\end{equation}
and let $\mathcal{T}_n$ be
\begin{equation}
\begin{split}
    \mathcal{T}_n(u,v) &= \frac{\mu}{2 i \hbar} u [\alpha H(v) - \beta H(-v)] + f(u) + g(v) \\
    &\quad + \frac{\mu^2 \omega^2}{4\hbar^2} \int_0^v \int_0^u xy \mathcal{T}_{n-1}(x,y) \dd{x} \dd{y} \, .
\end{split}
\end{equation}
Using the above equations, we get the next approximation $\mathcal{T}_1$, given by
\begin{equation}
\begin{split}
    \mathcal{T}_1 &= \mathcal{T}_0 + \frac{\mu}{2 i \hbar} \frac{\mu^2 \omega^2}{4\hbar^2} \int_0^v \int_0^u x^2 y [\alpha H(y) - \beta H(-y)] \dd{x} \dd{y} \\
    &\quad + \frac{\mu^2 \omega^2}{4\hbar^2} \int_0^v \int_0^u x y f(x) \dd{x} \dd{y} + \frac{\mu^2 \omega^2}{4\hbar^2} \int_0^v \int_0^u x y \, g(y) \dd{x} \dd{y} \, .
\end{split}
\end{equation}
Note that
\begin{equation}
    \int_0^v y [\alpha H(y) - \beta H(-y)] \dd{y} = 
    \begin{cases}
        \frac{v^2}{2} \alpha \, , & \quad \text{for } v > 0 \, , \\
        \frac{v^2}{2} (-\beta) \, , & \quad \text{for } v < 0 \, .
    \end{cases} 
\end{equation}
and so, we have
\begin{equation}
\begin{split}
    \mathcal{T}_1 &= \mathcal{T}_0 + \frac{\mu}{2 i \hbar} \frac{\mu^2 \omega^2}{4\hbar^2} \frac{u^3}{3} \frac{v^2}{2} [\alpha H(v) - \beta H(-v)] \\
    &\quad + \frac{\mu^2 \omega^2}{4\hbar^2} \frac{v^2}{2} \int_0^u x f(x) \dd{x} + \frac{\mu^2 \omega^2}{4\hbar^2} \frac{u^2}{2} \int_0^v y \, g(y) \dd{y} \, .
\end{split}
\end{equation}
The next approximation gives
\begin{equation}
\begin{split}
    \mathcal{T}_2 &= \mathcal{T}_1 + \frac{\mu}{2 i \hbar} \qty(\frac{\mu^2 \omega^2}{4\hbar^2})^2 \qty(\frac{u^5}{1 \cdot 3 \cdot 5}) \qty(\frac{v^4}{2 \cdot 4}) [\alpha H(v) - \beta H(-v)] \\
    &\quad + \qty(\frac{\mu^2 \omega^2}{4\hbar^2})^2 \frac{v^4}{2 \cdot 4} \int_0^u \int_0^x x' f(x') \dd{x'} \dd{x} + \qty(\frac{\mu^2 \omega^2}{4\hbar^2})^2 \frac{u^4}{2 \cdot 4} \int_0^v \int_0^y y' \, g(y') \dd{y'} \dd{y} \, .
\end{split}
\end{equation}
By induction, we get
\begin{equation}
\begin{split}
    \mathcal{T}_n(u,v) &= \frac{\mu}{2 i \hbar} \sum_{j=0}^n \qty(\frac{\mu\omega}{2\hbar})^{2j} \frac{1}{(2j+1)!} u^{2j+1} v^{2j} [\alpha H(v) - \beta H(-v)] \\
    &+ \sum_{j=0}^n \qty(\frac{\mu\omega}{2\hbar})^{2j} \frac{v^{2j}}{2^j j!} F_j(u) + \sum_{j=0}^n \qty(\frac{\mu\omega}{2\hbar})^{2j} \frac{u^{2j}}{2^j j!} G_j(v) \, ,
\end{split}
\end{equation}
for $n \geq 0$, where,
\begin{equation}
    F_0(u) = f(u) \, , \qquad G_0(v) = g(v) \, ,
\end{equation}
\begin{equation}
    F_s(u) = \int_0^u x F_{s-1}(x) \dd{x} \, , \qquad \text{and} \qquad G_s(v) = \int_0^v y G_{s-1}(y) \dd{y} \qquad \text{for } s \geq 1 \, .
\end{equation}
We arrive at the solution by letting $n \to \infty$,
\begin{equation}
\begin{split}
    \mathcal{T}(u,v) &= \frac{\mu}{2 i \hbar} \sum_{j=0}^\infty \qty(\frac{\mu\omega}{2\hbar})^{2j} \frac{1}{(2j+1)!} u^{2j+1} v^{2j} [\alpha H(v) - \beta H(-v)] \\
    &+ \sum_{j=0}^\infty \qty(\frac{\mu\omega}{2\hbar})^{2j} \frac{v^{2j}}{2^j j!} F_j(u) + \sum_{j=0}^\infty \qty(\frac{\mu\omega}{2\hbar})^{2j} \frac{u^{2j}}{2^j j!} G_j(v) \, .
\end{split}
\end{equation}
This is the general solution to the modified time kernel equation for the harmonic oscillator case.

Using $\fourier(it)^k = (-1)^k 2\pi \delta^{(k)}(\omega)$, and \cite{zemanian1987}
\begin{equation}
    \int_{-\infty}^\infty t^n H(t) e^{-it\omega} \dd{t} = i^n \pi \delta^{(n)}(\omega) + \frac{n!}{(i\omega)^{n+1}} \, ,
\end{equation}
the Wigner-Weyl transform gives
\begin{equation}
\begin{split}
    \mathcal{T}_\hbar(q,p) &= - \sum_{j=0}^\infty \frac{(-1)^j}{2j+1} \mu^{2j+1} \omega^{2j} \frac{q^{2j+1}}{p^{2j+1}} - i \pi (\alpha - \beta) \sum_{j=0}^\infty \frac{(-1)^j}{(2j+1)!} \frac{\mu^{2j+1} \omega^{2j}}{\hbar^{2j}} q^{2j+1} \delta^{(2j)}(p) \\
    &\quad + 2\pi \sum_{j=0}^\infty \frac{(-1)^j}{8^j j!} \frac{\mu^{2j} \omega^{2j}}{\hbar^{2j-1}} F_j(2q) \delta^{(2j)}(p) + \sum_{j=0}^\infty \frac{1}{2^j j!} \frac{\mu^{2j}\omega^{2j}}{\hbar^{2j}} q^{2j} \int_{\infty}^\infty G_j(v) \exp\qty(-i\frac{p}{\hbar}v) \dd{v} \, .
\end{split}
\end{equation}
The consequences of Hermiticity and time-reversal symmetry are similar to that of the free particle case. Imposing Hermiticity makes the second and third term purely real. For $g(v) = g^*(-v)$, we get $G_s(v) = G^*_s(-v)$ for $s \geq 0$, and so the Fourier transform in the last term is purely real as well. Imposing time-reversal symmetry gives imaginary components in the solution. Imposing both Hermiticity and time-reversal symmetry leaves us with the first and last terms (removing all Dirac delta terms and their derivatives), both purely real.

We therefore see that the general solution of the modified time kernel equation has terms in its Wigner-Weyl transform containing Dirac deltas and their derivatives. Imposing conditions on $\alpha$ (or $\beta$), $f$ and $g$ determines one specific (equivalence class) solution. These give solutions to the time-energy canonical commutation relation. We get the usual time of arrival solution when $\alpha = \beta$ and $f(u) = g(v) = 0$.

\section{Conclusion} \label{sec:conclusion}
The time kernel equation and the accompanying boundary conditions provide a solution to the time-energy canonical commutation relation in position representation, one specific case of which is the time of arrival solution. In this paper, we accomplished the following: (1) we rewrote the boundary conditions in a more convenient form, (2) gave conditions for Hermiticity and time-reversal symmetry, (3) provided some interesting examples of other Hamiltonian conjugate solutions, the shifted time of arrival solution, and (4) considered a modified form of the time kernel equation and studied its solutions. The modified time kernel equation provides an even more general way of constructing Hermitian Hamiltonian conjugates, removing the assumption on the form of the kernel and considering locally integrable functions, which resulted with Dirac deltas in the Wigner-Weyl transform that aren't present in the original kernel solution.

Both of the above methods of solving a differential equation constitute the supraquantization of an operator that is canonically conjugate to the Hamiltonian in position representation. We see that with this requirement, for a particle in 1-dimension under a continuous potential $V(q)$, both solutions give the time of arrival plus some other terms. This time of arrival term is always present, and could be thought of as the ``master time", wherein other possible Hamiltonian conjugates are just shifted time of arrivals.

The (original) time kernel equation solutions can be written as Bender-Dunne operators in position representation \cite{galapon2008}, and are thus Hilbert space operators \cite{bunao2014}. This connection with the Bender-Dunne operators is not present in the modified kernel solution. In position representation, the Bender-Dunne minimal solution gives a kernel that is an analytic function multiplied by a signum function; this does not encompass all possible solutions that the modified time kernel equation provides. The status of the modified time kernel equation solutions as Hilbert space operators is still an unanswered question.

Are all these solutions time observables? While, at the minimum, we require conjugacy with the Hamiltonian and Hermiticity, one can only interpret a particular solution as some time observable by looking at its classical limit. What about its self-adjointness? That would require further study, for instance, of the deficiency indices of these operators \cite{reed1975}. What is the role of time-reversal symmetry? For now, it is unclear whether this is a strict requirement for all time observables.

\section*{Data Availability Statement}
There is no data associated with this manuscript.

\begin{appendices}


\section{The General Potential Solution} \label{app:genpolsoln}
The general potential \eqref{eq:genpol} for the time kernel equation \eqref{eq:tke} gives the recurrence relation \eqref{eq:genpolrecur} for the coefficients of the analytic solution. The boundary conditions \eqref{eq:tkehnbca} give two branches of nonvanishing terms: the first condition gives the time of arrival, and the second condition gives the shifting term. This second condition gives a nonvanishing $\alpha_{0,2N-1}$ for positive integer $N$, and so the recurrence relation \eqref{eq:genpolrecur} makes all $\alpha_{m,n}$ terms with even $n$ vanish.

We solve this following analogous steps as that in \cite{galapon2004}. We now look at the branch of coefficients $\alpha_{m,n}$ with odd $n$. First, we have, for $m \geq 0$,
\begin{equation} \label{eq:am2n-1}
    \alpha_{m,2N-1} = -i^{2N-1} \frac{\beta}{\mu^{2N-2}} \delta_{m,0} \, .
\end{equation}
Using the recurrence relation \eqref{eq:genpolrecur}, we get $\alpha_{m,2N+1}$ for $m \geq 1$,
\begin{equation}
    \alpha_{m,2N+1} = -i^{2N-1} \frac{\beta}{\mu^{2N-2}} \frac{\mu}{2\hbar^2} \frac{1}{2N+1} \frac{a_m}{2^{m-1}} \, .
\end{equation}
Similarly,
\begin{equation}
\begin{split}
    \alpha_{m,2N+3} &= -i^{2N-1} \frac{\beta}{\mu^{2N-2}} \qty(\frac{\mu}{2\hbar^2})^2 \frac{1}{m(2N+1)(2N+3)} \sum_{s=1}^{m-1} \frac{s a_s a_{m-s}}{2^{m-2}} \\
    &\qquad -i^{2N-1} \frac{\beta}{\mu^{2N-2}} \frac{\mu}{2\hbar^2} \frac{1}{m(2N+3)} \frac{a_{m+2}}{2^{m+1}} \binom{m+2}{3} \, .
\end{split}
\end{equation}
This suggests that, for $m \geq 1$ and $j \geq 1$,
\begin{equation} \label{eq:amnasmj}
    \alpha_{m,2N-1+2j} = \sum_{s=0}^{j-1} \qty(\frac{\mu}{2\hbar^2})^{j-s} \alpha_{m,j}^{(s)} \, .
\end{equation}
The recurrence relation \eqref{eq:genpolrecur} gives
\begin{equation}
    \alpha_{m,2N-1+2j} = \frac{\mu}{2\hbar^2} \frac{1}{m(2N-1+2j)} \sum_{s=1}^\infty \frac{a_s}{2^{s-1}} \sum_{k=0}^{[s]} \binom{s}{2k+1} \alpha_{m-s+2k,2N+2j-2k-3} \, .
\end{equation}
Note that $\alpha_{m-s+2k,2N+2j-2k-3}$ is nonvanishing for $m-s+2k \geq 0$ and $2N+2j-2k-3 \geq 2N - 1$. Also note that the binomial coefficient is nonvanishing for $s \geq 2k+1$. We can then rewrite the above equation into
\begin{equation}
    \alpha_{m,2N-1+2j} = \frac{\mu}{2\hbar^2} \frac{1}{m(2N-1+2j)} \sum_{k=0}^{j-1} \sum_{s=2k+1}^{m+2k} \frac{a_s}{2^{s-1}} \binom{s}{2k+1} \alpha_{m-s+2k,2N-1+2(j-k-1)} \, .
\end{equation}
Substituting \eqref{eq:amnasmj} to the right hand side gives
\begin{equation}
\begin{split}
    \alpha_{m,2N-1+2j} &= \frac{\mu}{2\hbar^2} \frac{1}{m(2N-1+2j)} \sum_{k=0}^{j-1} \sum_{s=2k+1}^{m+2k} \frac{a_s}{2^{s-1}} \binom{s}{2k+1} \\
    &\quad \times \sum_{r=0}^{j-k-2} \qty(\frac{\mu}{2\hbar^2})^{j-k-1-r} \alpha^{(r)}_{m-s+2k,j-k-1} \, .
\end{split}
\end{equation}
Since $0 \leq k \leq j - 1$, then the summation in $r$ is nonvanishing up to $j-k-1$, and so we can rewrite this into
\begin{equation}
\begin{split}
    \alpha_{m,2N-1+2j} &= \frac{1}{m(2N-1+2j)} \\
    &\quad \times \sum_{k=0}^{j-1} \sum_{r=0}^{j-k-1} \sum_{s=2k+1}^{m+2k} \frac{a_s}{2^{s-1}} \binom{s}{2k+1} \qty(\frac{\mu}{2\hbar^2})^{j-k-r} \alpha^{(r)}_{m-s+2k,j-k-1} \, .
\end{split}
\end{equation}
Interchanging the $r$ and $k$ summations, we get
\begin{equation}
\begin{split}
    \alpha_{m,2N-1+2j} &= \frac{1}{m(2N-1+2j)} \\
    &\qquad \times \sum_{r=0}^{j-1} \sum_{k=0}^{j-r-1} \sum_{s=2k+1}^{m+2k} \frac{a_s}{2^{s-1}} \binom{s}{2k+1} \qty(\frac{\mu}{2\hbar^2})^{j-k-r} \alpha^{(r)}_{m-s+2k,j-k-1} \, .
\end{split}
\end{equation}
Rewriting this by letting $k$ run from $0$ to $r$ gives
\begin{equation}
\begin{split}
    \alpha_{m,2N-1+2j} &= \sum_{r=0}^{j-1} \qty(\frac{\mu}{2\hbar^2})^{j-r} \\
    &\qquad \times \frac{1}{m(2N-1+2j)} \sum_{k=0}^{r} \sum_{s=2k+1}^{m+2k} \frac{a_s}{2^{s-1}} \binom{s}{2k+1} \alpha^{(r-k)}_{m-s+2k,j-k-1} \, . 
\end{split}
\end{equation}
Comparing this with \eqref{eq:amnasmj}, we thus get the recurrence relation
\begin{equation} \label{eq:asmjrecur}
    \alpha_{m,j}^{(s)} = \frac{1}{m(2N-1+2j)} \sum_{r=0}^{s} \sum_{n=2r+1}^{m+2r} \frac{a_n}{2^{n-1}} \binom{n}{2r+1} \alpha^{(s-r)}_{m-n+2r,j-r-1} \, .
\end{equation}
From \eqref{eq:amnasmj}, we see that the nonvanishing coefficients contribute to $v^{2N-1+2j}$, giving a Wigner-Weyl contribution \eqref{eq:wignerweyl} proportional to
\begin{equation}
    \mathcal{T}_\hbar \propto \frac{1}{\hbar} \alpha_{m,2N-1+2j} \int_{-\infty}^\infty v^{2N-1+2j} \sgn(v) \exp\qty(-i\frac{p}{\hbar}v) \dd{v} \, ,
\end{equation}
which gives, upon using \eqref{eq:fouriertnsgnt},
\begin{equation}
    \mathcal{T}_\hbar \propto \sum_{s=0}^{j-1} (-1)^j \qty(\frac{\mu}{2})^{j-s} (2N-1+2j)! \alpha^{(s)}_{m,j} \frac{\hbar^{2N-1+2s}}{p^{2j}} \, .
\end{equation}
From our results for linear systems, we infer that the Wigner-Weyl transform here should look like
\begin{equation}
    \mathcal{T}_\hbar = \beta \frac{(2N-1)!}{2^{N-1}} \frac{\hbar^{2N-1}}{\mu^{3(N-1)}} \frac{1}{H^N} \, .
\end{equation}
This means that we are only interested at the $\hbar^{2N-1}$ term in the Wigner-Weyl transform since that is the only contributing factor to our desired classical limit. This term is the $s = 0$ term in our calculations. The leading $\hbar$ correction is then $\mathcal{O}(\hbar^{2N+1})$, corresponding to $s=1$. 

Since only the $s=0$ terms correspond to the classical limit, then from \eqref{eq:asmjrecur}, the recurrence relation that we are interested in studying is
\begin{equation} \label{eq:a0mjrecur}
    \alpha^{(0)}_{m,j} = \frac{1}{m(2N-1+2j)} \sum_{n=1}^m \frac{n a_n}{2^{n-1}} \alpha^{(0)}_{m-n,j-1} \, ,
\end{equation}
where we have let $\alpha^{(s)}_{m,j}$ for $s > 0$ vanish.

We note in passing that the leading $\hbar$ correction of the time of arrival result is $\mathcal{O}(\hbar^2)$ for nonlinear systems \cite{galapon2004}. We then need to let $\hbar$ vanish if we are to recover the correct classical limit for the time of arrival. One could choose $\beta$ such that $\beta\hbar^{2N-1}$ does not vanish, i.e., $\beta \propto 1/\hbar^{2N-1}$, so that the leading $\hbar$ correction for $T_C$ becomes $\mathcal{O}(\hbar^2)$. In this scenario, letting $\mathcal{O}(\hbar^2)$ vanish won't remove the $H^{-N}$ term.

Going back, since $\alpha_{m,2N-1}$ is given by \eqref{eq:am2n-1}, then
\begin{equation} \label{eq:a0m0}
    \alpha^{(0)}_{m,0} = -i^{2N-1} \frac{\beta}{\mu^{2N-2}} \delta_{m,0} \, ,
\end{equation}
\begin{equation}
    \alpha^{(0)}_{m,1} = -i^{2N-1} \frac{\beta}{\mu^{2N-2}} \frac{1}{2N+1} \frac{a_m}{2^{m-1}} \, ,
\end{equation}
\begin{equation}
    \alpha^{(0)}_{m,2} = -i^{2N-1} \frac{\beta}{\mu^{2N-2}} \frac{1}{m(2N+3)(2N+1)} \frac{1}{2^{m-2}} \sum_{s=1}^m s a_s a_{m-s} \, ,
\end{equation}
\begin{equation}
    \alpha^{(0)}_{m,3} = -i^{2N-1} \frac{\beta}{\mu^{2N-2}} \frac{1}{m(2N+5)(2N+3)(2N+1)} \frac{1}{2^{m-3}} \sum_{s=1}^m \frac{s a_s}{m-s} \sum_{r=1}^{m-s} r a_r a_{m-s-r} \, ,
\end{equation}
where we have used the recurrence relation \eqref{eq:a0mjrecur} to get the other nonvanishing terms. We infer that
\begin{equation} \label{eq:a0mjcmj1}
    \alpha^{(0)}_{m,j} = -i^{2N-1} \frac{\beta}{\mu^{2N-2}} \frac{\Gamma\qty(N+\frac{1}{2})}{\Gamma\qty(N+\frac{1}{2}+j)} \frac{1}{2^m} C_{m,j} \, .
\end{equation}
Substituting this to the right side of \eqref{eq:a0mjrecur} gives
\begin{equation} \label{eq:a0mjcmj2}
    \alpha^{(0)}_{m,j} = -i^{2N-1} \frac{\beta}{\mu^{2N-2}} \frac{\Gamma\qty(N+\frac{1}{2})}{\Gamma\qty(N+\frac{1}{2}+j)} \frac{1}{2^m} \frac{1}{m} \sum_{s=1}^m s a_s C_{m-s,j-1} \, .
\end{equation}
Comparing \eqref{eq:a0mjcmj1} and \eqref{eq:a0mjcmj2}, we get a recurrence relation for $C_{m,j}$,
\begin{equation} \label{eq:cmjrecur}
    C_{m,j} = \frac{1}{m} \sum_{s=1}^m s a_s C_{m-s,j-1} \, .
\end{equation}
From \eqref{eq:a0m0} and \eqref{eq:a0mjcmj1}, we see that $\alpha^{(0)}_{m,0} = -i \frac{\beta}{2^{2N-2}} \delta_{m,0} = -i \frac{\beta}{2^{2N-2}} \frac{1}{2^m} C_{m,0}$, so $C_{m,0} = 2^m \delta_{m,0}$, i.e.,
\begin{equation} \label{eq:cm0}
    C_{m,0} = \delta_{m,0} \, .
\end{equation}

We now go back to the time kernel solution, which takes the form
\begin{equation}
    T(u,v) = T_{\text{TOA}}(u,v) + \sum_{j=0}^\infty \sum_{m=0}^\infty \alpha_{m,2N-1+2j} u^m v^{2N-1+2j} \, .
\end{equation}
From \eqref{eq:amnasmj}, this becomes
\begin{equation}
    T(u,v) = T_{\text{TOA}}^{(0)}(u,v) + \sum_{j=0}^\infty \sum_{m=0}^\infty \qty(\frac{\mu}{2\hbar^2})^j \alpha^{(0)}_{m,j} u^m v^{2N-1+2j} \, ,
\end{equation}
wherein we only consider the $s = 0$ term. In the time of arrival solution, only the $s = 0$ term is taken as well \cite{galapon2004}. With our condition that $\beta\hbar^{2N-1}$ does not vanish, we see that this equation is the leading order solution to the general potential case, as we have let $\order{\hbar^2}$ vanish. To continue, we use \eqref{eq:a0mjcmj1} to get
\begin{equation}
    T(u,v) = T_{\text{TOA}}^{(0)}(u,v) - i^{2N-1} \frac{\beta}{\mu^{2N-2}} \sum_{j=0}^\infty \sum_{m=0}^\infty \qty(\frac{\mu}{2\hbar^2})^j \frac{\Gamma\qty(N+\frac{1}{2})}{\Gamma\qty(N+\frac{1}{2}+j)} \frac{1}{2^m} C_{m,j} u^m v^{2N-1+2j} \, .
\end{equation}
Thus, the Wigner-Weyl transform \eqref{eq:wignerweyl} gives
\begin{equation}
\begin{split}
    \mathcal{T}_\hbar(q,p) &= t_{\text{TOA}}(q,p) - \frac{\mu}{\hbar} \frac{\beta}{2^{2N-2}} \sum_{j=0}^\infty \sum_{m=0}^\infty \qty(\frac{\mu}{2\hbar^2})^j \frac{\Gamma\qty(N+\frac{1}{2})}{\Gamma\qty(N+\frac{1}{2}+j)} \frac{1}{2^m} C_{m,j} (2q)^m \\
    &\qquad \qquad \qquad \quad \times \int_{-\infty}^\infty v^{2N-1+2j} \sgn(v) \exp\qty(-i\frac{p}{\hbar}v) \dd{v} \, ,
\end{split}
\end{equation}
which becomes, after using \eqref{eq:fouriertnsgnt} and some simplifications,
\begin{equation}
\begin{split}
    \mathcal{T}_\hbar(q,p) &= t_{\text{TOA}}(q,p) + \beta \frac{(2N-1)!}{2^{N-1}} \frac{\hbar^{2N-1}}{\mu^{3(N-1)}} \\
    &\qquad \qquad \qquad \quad \times \qty(\frac{2\mu}{p^2})^N \sum_{k=0}^\infty (-1)^k \frac{\Gamma(N+k)}{k!\Gamma(N)} \qty(\frac{2\mu}{p^2})^k k! \sum_{m=0}^\infty C_{m,k} q^m \, .
\end{split}
\end{equation}
Note that for the general potential \eqref{eq:genpol},
\begin{equation}
    \frac{1}{H^N} = \qty(\frac{2\mu}{p^2})^N \sum_{k=0}^\infty (-1)^k \frac{\Gamma(N+k)}{k!\Gamma(N)} \qty(\frac{2\mu}{p^2})^k \qty(\sum_{s=1}^\infty a_s q^s)^k \, ,
\end{equation}
for sufficiently small $q$. We then are left with showing that
\begin{equation} \label{eq:cmk}
    k! \sum_{m=0}^\infty C_{m,k} q^m = \qty(\sum_{s=1}^\infty a_s q^s)^k \, ,
\end{equation}
so that the shifting term approaches the correct classical limit.

Firstly, for $k=0$,
\begin{equation}
    \sum_{m=0}^\infty C_{m,0} q^m = 1 \, ,
\end{equation}
and so, by \eqref{eq:cm0}, we know that this equality holds. To show that this holds for $k > 1$, we first differentiate both sides of \eqref{eq:cmk} with respect to $q$,
\begin{equation}
    k! \sum_{m=0}^\infty m C_{m,k} q^{m-1} = k \qty(\sum_{s=1}^\infty a_s q^s)^{k-1} \sum_{r=1}^\infty r a_r q^{r-1} \, .
\end{equation}
Suppose \eqref{eq:cmk} is true; we can rewrite the above equation into
\begin{equation}
    k! \sum_{m=0}^\infty m C_{m,k} q^{m-1} = k \qty((k-1)! \sum_{n=0}^\infty C_{n,k-1} q^n) \sum_{r=1}^\infty r a_r q^{r-1} \, .
\end{equation}
We rewrite the double sum as
\begin{equation}
    \sum_{m=0}^\infty m C_{m,k} q^{m-1} = \sum_{m=0}^\infty \sum_{j=0}^m (m-j) a_{m-j} C_{j,k-1} q^{m-1} \, ,
\end{equation}
or,
\begin{equation}
    \sum_{m=0}^\infty m C_{m,k} q^{m-1} = \sum_{m=0}^\infty \sum_{s=0}^m s a_s C_{m-s,k-1} q^{m-1} \, ,
\end{equation}
and thus, we obtain
\begin{equation}
    m C_{m,k} = \sum_{s=0}^m s a_s C_{m-s,k-1} \, .
\end{equation}
By \eqref{eq:cmjrecur}, we know that this equality holds as well, implying that \eqref{eq:cmk} is indeed true.

\end{appendices}



\end{document}